\theoremstyle{plain}
\newtheorem{theorem}{Theorem}
\newtheorem{lemma}{Lemma}
\theoremstyle{definition}
\newtheorem{definition}{Definition}
\newenvironment{breakablealgorithm}
  {% \begin{breakablealgorithm}
   \begin{center}
     \refstepcounter{algorithm}
     \hrule height.8pt depth0pt \kern2pt
     \renewcommand{\caption}[2][\relax]{
       {\raggedright\textbf{\ALG@name~\thealgorithm} ##2\par}%
       \ifx\relax##1\relax % 
         \addcontentsline{loa}{algorithm}{\protect\numberline{\thealgorithm}##2}%
       \else % 
         \addcontentsline{loa}{algorithm}{\protect\numberline{\thealgorithm}##1}%
       \fi
       \kern2pt\hrule\kern2pt
     }
  }{% \end{breakablealgorithm}
     \kern2pt\hrule\relax
   \end{center}
  }
\journal{}
\date{}
\begin{document}
\begin{frontmatter}
\title{Distributed Exact Quantum Amplitude Amplification Algorithm for Arbitrary Quantum States}

\author[my1address]{Xu Zhou\corref{mycorrespondingauthor}}
\ead{zhouxu@tgqs.net}
\cortext[mycorrespondingauthor]{Corresponding author}

\author[my2address]{Wenxuan Tao}
%\ead{taowenxuan@miqroera.com}

\author[my3address]{Keren Li}
%\ead{likr@szu.edu.cn}

\author[my4address]{Shenggen Zheng}
%\ead{zhengshenggen@quantumsc.cn}

\address[my1address]{Yangtze Delta Industrial Innovation Center of Quantum Science and Technology, Suzhou 215000, China}

\address[my2address]{Shanghai MiQro Era Digital Technology Co., Ltd., Shanghai 200023, China}

\address[my3address]{Shenzhen University, Shenzhen 518060, China}

\address[my4address]{Quantum Science Center of Guangdong-Hong Kong-Macao Greater Bay Area (Guangdong), Shenzhen 518045, China}

\begin{abstract}
In the noisy intermediate-scale quantum (NISQ) era, distributed quantum computation has garnered considerable interest, as it overcomes the physical limitations of single-device architectures and enables scalable quantum information processing. In this study, we focus on the challenge of achieving exact amplitude amplification for quantum states with arbitrary amplitude distributions and subsequently propose a Distributed Exact Quantum Amplitude Amplification Algorithm (DEQAAA). Specifically, (1) it supports partitioning across any number of nodes $t$ within the range $2 \leq t \leq n$; (2) the maximum qubit count required for any single node is expressed as $\max \left(n_0,n_1,\dots,n_{t-1} \right) $, where $n_j$ represents the number of qubits at the $j$-th node, with $\sum_{j=0}^{t-1} n_j =n$; (3) it can realize exact amplitude amplification for multiple targets of a quantum state with arbitrary amplitude distributions; (4) we verify the effectiveness of DEQAAA by resolving a specific exact amplitude amplification task involving two targets (8 and 14 in decimal) via MindSpore Quantum, a quantum simulation software, with tests conducted on 4-qubit, 6-qubit, 8-qubit and 10-qubit systems. Notably, through the decomposition of $C^{n-1}PS$ gates, DEQAAA demonstrates remarkable advantages in both quantum gate count and circuit depth as the qubit number scales, thereby boosting its noise resilience. In the 10-qubit scenario, for instance, it achieves a reduction of over $97\%$ in both indicators compared to QAAA and EQAAA, underscoring its outstanding resource-saving performance.
\end{abstract}

\begin{keyword}
Noisy intermediate-scale quantum (NISQ) era; Distributed quantum computation; Distributed Exact Quantum Amplitude Amplification Algorithm (DEQAAA); MindSpore Quantum; Quantum gate decomposition
\end{keyword}

\end{frontmatter}

%\linenumbers

\section{Introduction}\label{sec-introduction}
Quantum computing \cite{nielsen2010quantum} has emerged as a revolutionary paradigm in the field of computation, harnessing the principles of quantum mechanics to perform complex calculations at unprecedented speeds. The fundamental building blocks of quantum computing, qubits, exhibit unique properties such as superposition and entanglement, which enable quantum computers to process vast amounts of information simultaneously. This inherent parallelism has the potential to solve certain problems exponentially faster than classical computers \cite{deutsch1985quantum, deutsch1992rapid, bernstein1993quantum, simon1997power, shor1994algorithms, grover1997quantum, harrow2009quantum, peruzzo2014variational, farhi2014quantum}, making quantum computing a focal point of research and development in both academia and industry. 

Over the years, significant advancements have been made in the field of quantum computing, with milestones such as the demonstration of quantum supremacy \cite{arute2019quantum, zhong2020quantum, google2025quantum, gao2025establishing} highlighting the potential of quantum processors to outperform their classical counterparts. However, practical quantum computing still faces numerous challenges, particularly in the noisy intermediate-scale quantum (NISQ) era \cite{preskill2018quantum}. In this era, quantum devices are characterized by a limited number of qubits and high error rates, which restrict the complexity of quantum algorithms that can be implemented. These inherent limitations pose substantial challenges for achieving reliable and scalable quantum computation, yet they also drive innovative approaches to algorithm design.

To address these constraints and enable scalable quantum information processing, substantial research efforts have turned to distributed quantum computing \cite{barral2025review}. This paradigm employs multiple quantum nodes interconnected via quantum channels, effectively distributing quantum resources and alleviating the physical limitations of monolithic devices. By sharing the computational load across nodes, distributed systems can tackle problems of greater scale and complexity than any single device could manage. Its architecture is characterized by lower per-node qubit requirements and shallower circuit depths, thereby enhancing feasibility. Consequently, the development of quantum algorithms tailored for such distributed architectures has become a focal point of research. 

In the early years, there has been a significant amount of research work on algorithm design \cite{buhrman2003distributed, yimsiriwattana2004distributed, beals2013efficient} in distributed quantum computing. In 2021, Avron et al. \cite{avron2021quantum} introduced a technique for decomposing subfunctions from an original Boolean function $f: \{0,1\}^n \rightarrow \{0,1\}$. Inspired by this method, Qiu et al. \cite{qiu2024distributed} extended this method and developed serial and parallel versions of a distributed Grover's algorithm. While this constitutes an important advance, the parallel mode requires $2^k(n-k)$ qubits, where $2^k$ is the number of computing nodes. This substantial resource demand poses a significant scalability challenge for the scheme. To address the issues of excessive qubit usage, the limitation of node numbers to $2^k$, and the challenge of achieving exact search, Zhou et al. \cite{zhou2023distributedEGA, zhou2025distributedEGGA} successively proposed a single-target exact Grover's algorithm with $\lfloor n/2 \rfloor$ nodes and a multi-target exact generalized Grover's algorithm with  $2 \leq t \leq n$ nodes. Moreover, a series of distributed quantum algorithms have been successively proposed, including the distributed Deutsch-Jozsa (DJ) algorithm \cite{li2025distributedDJ}, the distributed Bernstein-Vazirani (BV) algorithm \cite{zhou2023distributedBV}, the distributed Simon's algorithm \cite{tan2022distributed, zhou2025distributedSimon}, and the distributed Shor's algorithm \cite{Xiao2022rjo}.

Significant strides have been made in the experimental realm of distributed quantum computing \cite{liu2021distributed, ying2023experimental, akhtar2023high}. In 2024, Liu et al. \cite{liu2024nonlocal} successfully implemented a non-local CNOT gate between two quantum nodes separated by 7 km, leveraging photonic quantum systems. Their approach, which included multimode solid-state quantum storage and a quantum gate teleportation protocol, enabled the execution of the DJ and quantum phase estimation algorithms. In 2025, Main et al. \cite{main2025distributed} demonstrated distributed quantum computation between two trapped-ion modules connected via photons, realizing a non-local CZ gate between qubits in independent modules. This work experimentally validated distributed quantum operations and successfully executed Grover's algorithm. Subsequently, Wei et al. \cite{wei2025universal} achieved distributed blind quantum computation on a solid-state platform, realizing a universal blind gate set and executing a hidden DJ algorithm across a two-node network. These advancements represent key milestones towards secure and scalable distributed quantum processing.

Quantum amplitude amplification algorithm (QAAA) \cite{brassard2002quantum} is a fundamental technique in quantum computing that significantly enhances the accuracy of quantum algorithms. At its core, amplitude amplification is a method to increase the amplitude of a desired quantum state, thereby improving the success probability of quantum algorithms. This technique is widely used in various quantum algorithms \cite{rajagopal2021quantum, mandl2024amplitude, perriello2025quantum}, including Grover's search algorithm \cite{grover1997quantum}, which provides a quadratic speedup for unstructured search problems compared to classical algorithms.

Despite its widespread application, QAAA still has certain limitations. First, traditional QAAA was not designed for distributed quantum computing architectures, which restricts its application in the NISQ era. Second, except in specific cases, standard QAAA generally cannot achieve exact amplitude amplification \cite{diao2010exactness}. These limitations consequently constrain the potential of QAAA within NISQ computers, necessitating the development of novel quantum algorithms to address these challenges.

Recently, Hua et al. \cite{hua2025distributed} introduced a distributed QAAA. In their work, they first posited that $\mathcal{A}=\mathcal{A}_1 \otimes \mathcal{A}_2 $. They then employed a subfunction partitioning technique analogous to that in Ref.~\cite{qiu2024distributed}, generating $2^k$ Boolean subfunctions with $(n-k)$-bit inputs. Subsequently, they utilized a fixed-point search algorithm \cite{yoder2014fixed} for each node, which is an enhancement of the original fixed-point search algorithm \cite{grover2005fixed} and operates without requiring prior knowledge of the target state probabilities. Finally, by determining the first $k$ bits based on the subfunction indices and the last $n-k$ bits based on the fixed-point search results, they achieved amplitude amplification of the original quantum state for specific $n$-qubit states.

However, their approach is subject to several limitations. Primarily, it relies on the assumption that the unitary operator $\mathcal{A}$ can be decomposed into a tensor product of two smaller unitaries, $\mathcal{A}_1$ and $\mathcal{A}_2 $. This assumption restricts its applicability to quantum states with arbitrary amplitude distributions. This constraint significantly limits the choice of initial states amenable to amplification in practical scenarios. Additionally, while the scheme reduces the qubit count per node from $n$ to $n-k$, it inherits the same drawbacks as Ref.~\cite{qiu2024distributed} regarding the total qubit resource overhead, which scales as $2^k(n-k)$, and the number of nodes, limited to $2^k$. Finally, akin to the original QAAA, this methodology generally cannot achieve exact amplitude amplification except under specific conditions.

In this paper, we present a novel Distributed Exact Quantum Amplitude Amplification Algorithm (DEQAAA) tailored to tackle the challenges inherent in quantum computing within the NISQ era. This algorithm is characterized by several notable features: (1) it supports distribution across an arbitrary number of nodes $t$ (with $2 \leq t \leq n$); (2) the maximum qubit requirement for any individual node is defined as $\max \left(n_0,n_1,\dots,n_{t-1} \right) $, where $n_j$ stands for the qubit count at the $j$-th node and $\sum_{j=0}^{t-1} n_j =n$; (3) it is able to realize exact amplitude amplification for multiple targets in a quantum state with arbitrary amplitude distributions.

We validate the proposed DEQAAA framework using MindSpore Quantum \cite{mq_2021, xu2024mindspore}, a comprehensive software library for quantum computing. Specifically, we demonstrate its efficacy by solving an exact amplitude amplification problem for two targets (8 and 14 in decimal), with tests carried out on 4-qubit, 6-qubit, 8-qubit and 10-qubit systems. The successful execution of the corresponding quantum circuits on this platform confirms the practical viability of our method. More importantly, the decomposition of multi-controlled $C^{n-1}PS$ gates enables DEQAAA to gain remarkable superiority in both quantum gate count and circuit depth with the scaling of qubit number, thus enhancing its noise robustness. This noise-resilient trait makes DEQAAA highly suitable for deployment in the NISQ era.

The structure of this paper is as follows. Following a review of the QAAA and its exact variant in Section \ref{sec-preliminary}, we formally introduce our DEQAAA in Section \ref{sec-DEQAA}. Section \ref{sec-analysis} then rigorously validates DEQAAA, proving its correctness and underscoring its performance benefits. The practical utility of our algorithm is demonstrated through an implementation on MindSpore Quantum in Section \ref{sec-experiment}. We conclude in Section \ref{sec-conclusion} by synthesizing our key contributions and proposing future research directions.

\section{Preliminary}\label{sec-preliminary}
In this section, we present a comprehensive overview of the QAAA \cite{brassard2002quantum} and its exact variant, with a focus on addressing the amplitude amplification problem.

\subsection{Quantum Amplitude Amplification Algorithm}\label{sec-QAA}
Consider an $n$-qubit state $\vert\Psi\rangle$ with an arbitrary amplitude distribution, generated by a unitary operator $\mathcal{A}$:
\begin{eqnarray}
\vert\Psi\rangle = \mathcal{A} \vert 0\rangle^{\otimes n} =  \sum_{x \in \{0,1\}^n} \gamma_x \vert x\rangle, 
\end{eqnarray}
where $\gamma_x$ are complex amplitudes satisfying $ \sum_{x \in \{0,1\}^n} \left \vert \gamma_x \right \vert ^2=1$.

A Boolean function $f:\{0,1\}^n\rightarrow\{0,1\}$ is given, which partitions the input space into two disjoint subsets:
\begin{equation}
f(x)=
\begin{cases}
1, & x \in X_g, \\
0, & x \in X_b,
\end{cases}
\label{eq-f-definition}
\end{equation}
where $X_g, X_b \in\{0,1\}^n$ represent the sets of target and non-target strings, respectively. Consequently,  $\vert\Psi\rangle$ can be expressed as
\begin{align}
\vert\Psi\rangle
&= \sum_{x \in X_g} \alpha_x \vert x\rangle + \sum_{x \in X_b} \beta_x \vert x\rangle\\
&= \sqrt{p_g}\vert\Psi_g\rangle + \sqrt{1-p_g}\vert\Psi_b\rangle.
\label{eq-psi-2part}
\end{align}
Here, $\alpha_x$ and $\beta_x$ denote the amplitudes of the basis state $\vert x\rangle$ belonging to the target and non-target sets, respectively. The component $\vert\Psi_g\rangle$ represents the superposition of all target states, while $\vert\Psi_b\rangle$ corresponds to the superposition of non-target states. The probability of obtaining a target state (success probability) is given by
\begin{equation}
p_g = \sum_{x \in X_g} \vert\alpha_x\vert^2 .
\label{eq-pg}
\end{equation}

The QAAA process, which resembles that of Grover's algorithm \cite{grover1997quantum}, involves iteratively applying the amplitude amplification operator $Q$ a total of $r$ times, where
\begin{equation}
    r=\left\lfloor\frac{\pi}{4\arcsin\left(\sqrt{p_g}\right)}\right\rfloor,
\label{eq-r-definition}
\end{equation}
to amplify the amplitude of the target states. The operator $Q$ is defined as
\begin{equation}
Q=\mathcal{A}S_{\vert 0 \rangle ^ {\otimes n}}\mathcal{A}^{\dagger}S_f,
\label{eq-Q-definition}
\end{equation}
where the operations of $S_f$ and $S_{\vert 0 \rangle ^ {\otimes n}}$ are defined as
\begin{equation}
\vert x\rangle\stackrel{S_f}{\longrightarrow}
\begin{cases}
-\vert x\rangle , &  x \in X_g , \\
\vert x\rangle , & x \in X_b ,
\end{cases}
\label{eq-Sf-2}
\end{equation}
and
\begin{equation}
\vert x\rangle\stackrel{S_{\vert 0 \rangle ^ {\otimes n}}}{\longrightarrow}
\begin{cases}
-\vert 0\rangle^{\otimes n} , & \vert x\rangle=\vert 0\rangle^{\otimes n} , \\
\vert x\rangle , & \text{otherwise,}
\end{cases}
\label{eq-S0-2}
\end{equation}
respectively.

We provide a concise overview of the QAAA in Algorithm \ref{algo-QAA}, with a schematic of the quantum circuit shown in Figure \ref{fig-QAA}. A detailed analysis of the QAAA is provided in Appendix \ref{AnalysisQAAA}.

\begin{breakablealgorithm}
\caption{Quantum Amplitude Amplification Algorithm}
\label{algo-QAA}
\renewcommand{\algorithmicrequire}{\textbf{Input:}} 
\renewcommand{\algorithmicensure}{\textbf{Output:}}
\begin{algorithmic}[1]
\REQUIRE{
\makebox[1.0em][r]{(1)} \parbox[t]{0.93\linewidth}{The number of qubits $n$;} \\
\makebox[2.5em][r]{(2)} \parbox[t]{0.93\linewidth}{A unitary operator $\mathcal{A}$ that prepares the initial state $\vert\Psi\rangle$;} \\
\makebox[2.5em][r]{(3)} \parbox[t]{0.93\linewidth}{A Boolean function $f: \{0,1\}^n \rightarrow \{0,1\}$ identifying the target set $X_g$ (i.e., $f(x)=1$ for $x\in X_g$, and $0$ otherwise);} \\ [0.8em]
\makebox[2.5em][r]{(4)} \parbox[t]{0.93\linewidth}{The success probability $p_g$ as in Eq.~\eqref{eq-pg};} \\
\makebox[2.5em][r]{(5)} \parbox[t]{0.93\linewidth}{The number of iterations $r$ of the operator $Q$ as in Eq.~\eqref{eq-r-definition};} \\
\makebox[2.5em][r]{(6)} \parbox[t]{0.93\linewidth}{The amplitude amplification operator $Q$ as in Eq.~\eqref{eq-Q-definition}, along with the component operators $S_f$ and $S_{\vert 0 \rangle ^ {\otimes n}}$ as in Eq.~\eqref{eq-Sf-2} and Eq.~\eqref{eq-S0-2}, respectively.}\\[0.8em]
}

\ENSURE{The target string $x\in X_g$ with great probability.}

\STATE Initialize $n$ qubits as $\vert 0\rangle^{\otimes n}$.
\STATE Apply the unitary operator $\mathcal{A}$ once.
\STATE Initialize $k = 0$.

\WHILE{$k < r$}
    \STATE Apply the operator $Q$ once.
    \STATE Update $k = k + 1$.
\ENDWHILE

\STATE Measure each qubit in the basis $\{\vert 0\rangle, \vert 1\rangle\}$.
\end{algorithmic}
\end{breakablealgorithm}

\begin{figure}[H]
\centering
\includegraphics[width=0.4\textwidth]{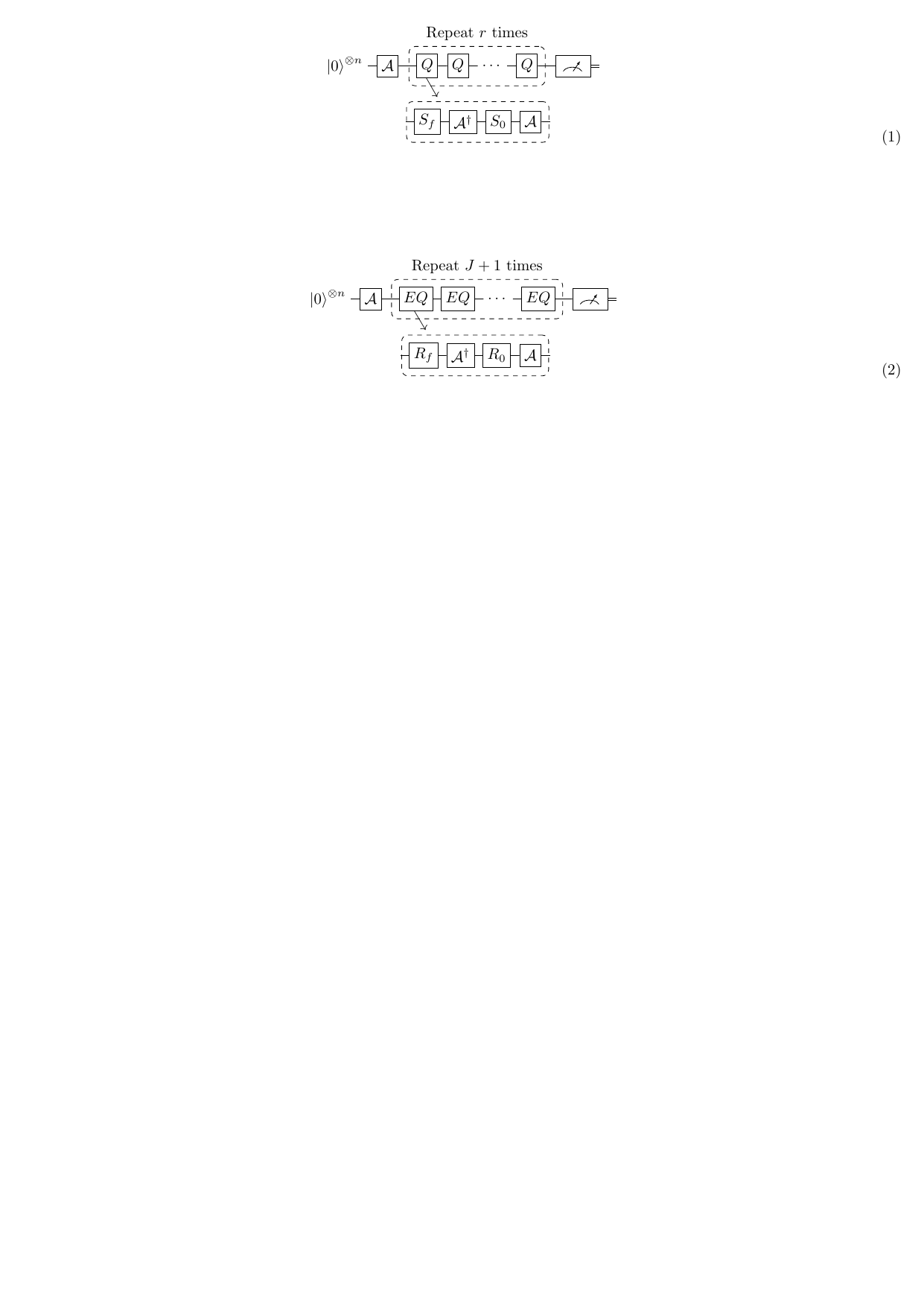}
\caption{Quantum circuit of the QAAA.}
\label{fig-QAA}
\end{figure}

\subsection{Exact Quantum Amplitude Amplification Algorithm}\label{sec-EQAA}
The QAAA can output $x\in X_g$ with a high probability of success. However, except in special cases, it does not achieve exact precision, which would require a 100\% probability of outputting $x\in X_g$.

Inspired by the exact Grover's algorithm by Long \cite{long2001grover,zhou2023distributedEGA, zhou2025distributedEGGA}, the key concept involves extending the Grover operator $G$ to the exact search operator $L$. Similarly, the QAAA can be enhanced by replacing the operator $Q$ with an exact amplitude amplification operator $EQ$. This modification, built on the foundation of QAAA, enables exact amplification and leads to the Exact Quantum Amplitude Amplification Algorithm (EQAAA).

The operator $EQ$ retains the composition structure of $Q$, but incorporates new operators $R_{f}^{\phi}$ and $R_{\vert 0 \rangle ^ {\otimes n}} ^{\phi}$, defined as follows:
\begin{equation}
    EQ = \mathcal{A} R_{\vert 0 \rangle ^ {\otimes n}} ^{\phi} \mathcal{A}^{\dagger} R_{f}^{\phi},
    \label{eq-EQ}
\end{equation}
where the operations of $R_{f}^{\phi}$ and $R_{\vert 0 \rangle ^ {\otimes n}} ^{\phi}$ are defined as
\begin{equation}
\vert x\rangle\stackrel{R_{f}^{\phi}}{\longrightarrow}
\begin{cases}
e^{i \phi} \vert x\rangle , & x \in X_g , \\
\vert x\rangle , & x \in X_b,
\end{cases}
\label{eq-Rf-2}
\end{equation}
and
\begin{equation}
\vert x\rangle\stackrel{R_{\vert 0 \rangle ^ {\otimes n}} ^{\phi}}{\longrightarrow}
\begin{cases}
e^{i \phi} \vert 0\rangle^{\otimes n} , & \vert x\rangle=\vert 0\rangle^{\otimes n} , \\
\vert x\rangle , & \text{otherwise,}
\end{cases}
\label{eq-R0-2}
\end{equation}
respectively.

Similarly, the operator $EQ$ will be iteratively applied $J + 1$ times, where
\begin{equation}
    J = \left\lfloor \frac{\pi}{4\arcsin\left(\sqrt{p_g}\right)} - \frac{1}{2} \right\rfloor,
    \label{eq-J}
\end{equation}
and $p_g$ is defined by Eq.~\eqref{eq-pg}.

The phase angle $\phi$ involved in $R_{f}^{\phi}$ and $R_{\vert 0 \rangle ^ {\otimes n}} ^{\phi}$ is calculated as follows:
\begin{equation}
    \phi = 2 \arcsin \left(\frac{\sin \left(\frac{\pi}{4 J + 6}\right)}{\sqrt{p_g}}\right).
    \label{eq-angle-phi}
\end{equation}

We present a concise summary of the EQAAA in Algorithm \ref{algo-EQAA}, accompanied by a schematic illustration of the quantum circuit in Figure \ref{fig-EQAA}. A comprehensive analysis of the EQAAA is detailed in Appendix \ref{AnalysisEQAAA}.

\begin{breakablealgorithm}
\caption{Exact Quantum Amplitude Amplification Algorithm}
\label{algo-EQAA}
\renewcommand{\algorithmicrequire}{\textbf{Input:}} 
\renewcommand{\algorithmicensure}{\textbf{Output:}}
\begin{algorithmic}[1]
\REQUIRE{
\makebox[1.0em][r]{(1)} \parbox[t]{0.93\linewidth}{The number of qubits $n$;} \\
\makebox[2.5em][r]{(2)} \parbox[t]{0.93\linewidth}{A unitary operator $\mathcal{A}$ that prepares the initial state $\vert\Psi\rangle$;} \\
\makebox[2.5em][r]{(3)} \parbox[t]{0.93\linewidth}{A Boolean function $f: \{0,1\}^n \rightarrow \{0,1\}$ identifying the target set $X_g$ (i.e., $f(x)=1$ for $x\in X_g$, and $0$ otherwise);} \\ [0.8em]
\makebox[2.5em][r]{(4)} \parbox[t]{0.93\linewidth}{The success probability $p_g$ as in Eq.~\eqref{eq-pg};} \\
\makebox[2.5em][r]{(5)} \parbox[t]{0.93\linewidth}{The exact amplitude amplification operator $EQ$ as in Eq.~\eqref{eq-EQ}, along with the component operators $R_{f}^{\phi}$ and $R_{\vert 0 \rangle ^ {\otimes n}} ^{\phi}$ as in Eq.~\eqref{eq-Rf-2} and Eq.~\eqref{eq-R0-2}, respectively;} \\ [0.4em]
\makebox[2.5em][r]{(6)} \parbox[t]{0.93\linewidth}{The number of iterations $J+1$ of the exact operator $EQ$, where $J$ is defined as in Eq.~\eqref{eq-J};} \\
\makebox[2.5em][r]{(7)} \parbox[t]{0.93\linewidth}{The phase angle $\phi$ as in Eq.~\eqref{eq-angle-phi}.}
}
\ENSURE{The target string $x\in X_g$ with a probability of $100 \%$.}

\STATE Initialize $n$ qubits as $\vert 0\rangle^{\otimes n}$.
\STATE Apply the unitary operator $\mathcal{A}$ once.
\STATE Initialize $k = 0$.

\WHILE{$k < J + 1$}
    \STATE Apply the operator $EQ$ once.
    \STATE Update $k = k + 1$.
\ENDWHILE

\STATE Measure each qubit in the basis $\{\vert 0\rangle, \vert 1\rangle\}$.
\end{algorithmic}
\end{breakablealgorithm}

\begin{figure}[H]
\centering
\includegraphics[width=0.4\textwidth]{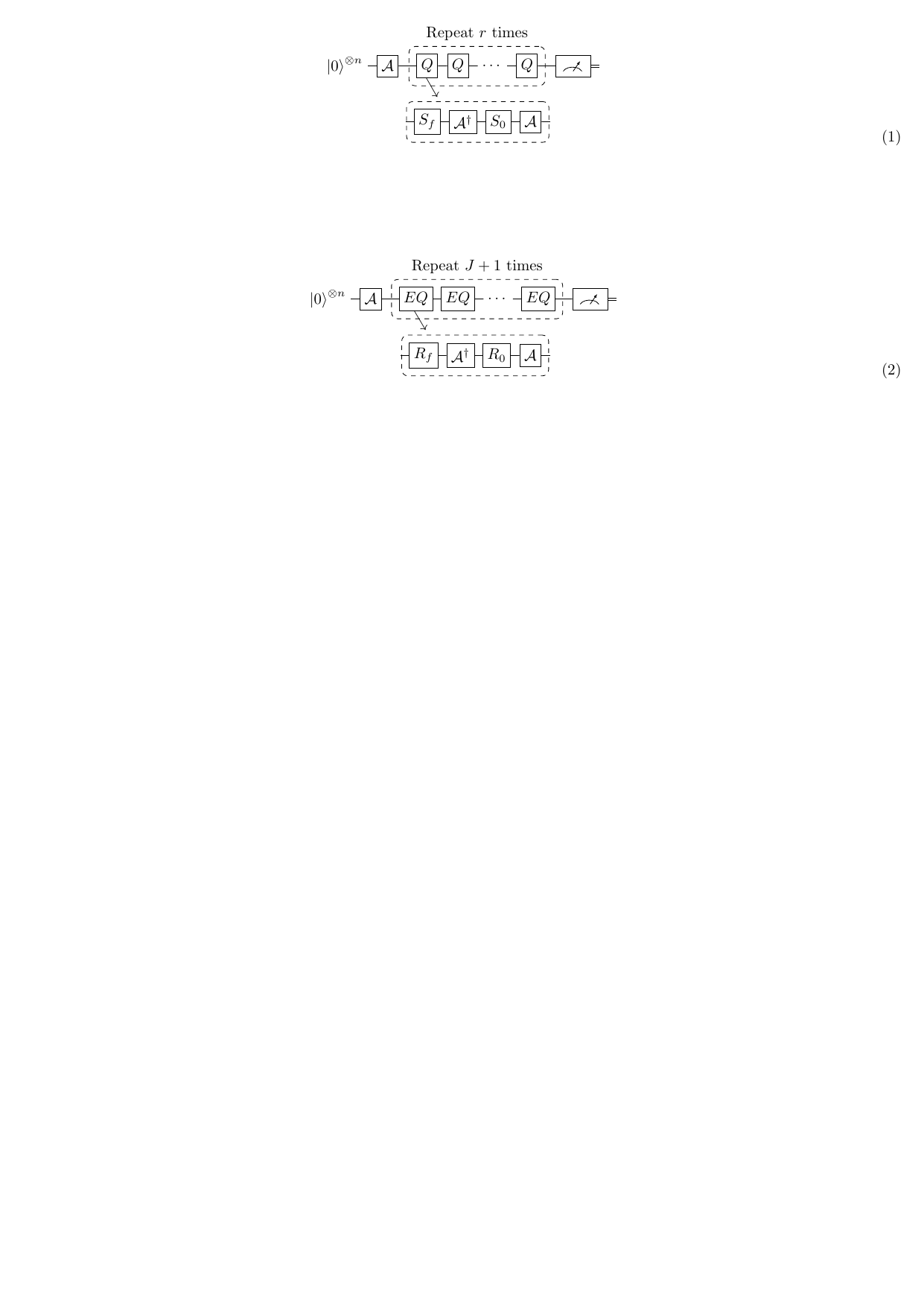}
\caption{Quantum circuit of the EQAAA.}
\label{fig-EQAA}
\end{figure}

\section{Distributed Exact Quantum Amplitude Amplification Algorithm}\label{sec-DEQAA}

We present the Distributed Exact Quantum Amplitude Amplification Algorithm (DEQAAA) in this section. It operates on a distributed quantum system composed of $ t $ nodes, with $2 \le t \le n$.

\subsection{Substates}
The successful application of both QAAA and EQAAA relies on specific prior knowledge: the probability distribution of the initial state $\vert\Psi\rangle$ and the number of target strings in the Boolean function $f(x)$. This knowledge allows for the calculation of the success probability $p_g$, the core parameter that dictates the entire amplification process. Accordingly, $p_g$ directly specifies the number of iterations required by each algorithm: $r$ for QAAA and $J+1$ for EQAAA. It is imperative to note that a precise determination of the probability distribution of $\vert\Psi\rangle$ is a necessary condition for the exact implementation of EQAAA.

Similarly, an analogous assumption is required: we presume that the probability distribution of an arbitrary quantum state can be precisely determined. This allows the success probability, $p_g$, to be defined according to the number of target strings specified by $f(x)$. In practice, whether in numerical simulations or physical experiments, this distribution is approximated statistically via repeated measurements. The accuracy of the experimental distribution in approximating the theoretical one improves with the number of measurement shots. This convergence will be quantitatively analyzed in Section \ref{sec-experiment}.

As outlined in Section \ref{sec-preliminary}, an $n$-qubit state $ \vert \Psi\rangle$ with an arbitrary amplitude distribution is prepared by applying an arbitrary unitary operator $\mathcal{A}$ to the initial state:
\begin{eqnarray}
\vert\Psi\rangle = \mathcal{A} \vert 0\rangle^{\otimes n} =  \sum_{x \in \{0,1\}^n} \gamma_x \vert x\rangle.
\end{eqnarray}
Based on the aforementioned assumption, we denote the exact probability distribution of $\vert \Psi\rangle$ as $P$, specifying the measurement probability of $\vert x\rangle$ as 
\begin{eqnarray}
P(x)=  \vert \gamma_x \vert ^2.
\end{eqnarray}

Consider a distributed system composed of $2 \leq t \leq n$ nodes with a total of $n$ qubits. The next step involves deriving the local probability distribution $P_j$ for each individual node from the global distribution $P$. A substate, denoted as $\vert\varphi_j\rangle$, is then determined for the $j$-th node (where $j \in \{ 0, 1, \cdots, t-1 \}$) according to its specific distribution $P_j$. It is crucial to clarify that the mapping from a probability distribution to a quantum state is not unique; numerous distinct substates are consistent with the same $P_j$. For the purposes of the DEQAAA, any substate that satisfies $P_j$ is acceptable for implementing the DEQAAA. The detailed steps are outlined in Algorithm \ref{algo-substate} below.

\begin{breakablealgorithm}
\caption{The determination algorithm of the substates.}
\label{algo-substate}
\renewcommand{\algorithmicrequire}{\textbf{Input:}} 
\renewcommand{\algorithmicensure}{\textbf{Output:}}
\begin{algorithmic}[1]
\REQUIRE{
\makebox[1.0em][r]{(1)} \parbox[t]{0.93\linewidth}{The total number of qubits $n$;} \\
\makebox[2.5em][r]{(2)} \parbox[t]{0.93\linewidth}{The number of computing nodes $2 \leq t \leq n$;} \\
\makebox[2.5em][r]{(3)} \parbox[t]{0.93\linewidth}{The number of qubits of the $j$-th node, denoted by $n_j$, where $j \in \{ 0, 1, \cdots, t-1 \}$ and $\sum_{j=0}^{t-1} n_j = n$;} \\
\makebox[2.5em][r]{(4)} \parbox[t]{0.93\linewidth}{The exact probability distribution $P$ of $\vert\Psi\rangle$.}
}
\ENSURE{The substates $\vert \varphi_j \rangle$, where $j \in \{ 0, 1, \cdots, t-1 \}$.}

\STATE Initialize $j = 0$.
\WHILE{$j < t$}
    \IF{$j = 0$}
        \STATE Let $\delta = 0$, where $\delta$ represents the sum of the qubits preceding the $j$-th node.
    \ELSE
        \STATE Let $\delta = \sum_{a=0}^{j-1} n_a$.
    \ENDIF
    \STATE Let $\sigma = n - \delta - n_j$, where $\sigma$ represents the sum of the qubits following the $j$-th node.
    \STATE Initialize $k = 0$.
    \WHILE{$k < 2^{n_j}$}
        \STATE Let
        \begin{equation}
            P_j\left(\text{bin}\left(k,n_j\right)\right) = \sum_{u = 0}^{2^\delta} \sum_{v = 0}^{2^{\sigma}} P\left(\text{bin}\left(u \cdot 2^{n - \delta} + k \cdot 2^{\sigma} + v,n\right)\right),
        \end{equation}
        where $P_j$ is the probability distribution of the $j$-th node, and $\text{bin}(y,z)$ transfers a decimal number $y$ to a $z$-bit binary string.
        \STATE Determine a substate according to $P_j$:
        \begin{eqnarray}
            \vert\varphi_j\rangle=\sum_{x\in\{0,1\}^{n_j}}\sqrt{P_j(x)}\vert x\rangle.
        \end{eqnarray}
        \STATE Update $k = k + 1$.
    \ENDWHILE
    \STATE Update $j = j + 1$.
\ENDWHILE
\end{algorithmic}
\end{breakablealgorithm}

Once the substate $\vert\varphi_j\rangle$ is determined, it implies the existence of an $n_j$-qubit unitary operator $\mathcal{A}_{\vert \varphi_j \rangle}$ that can generate this substate:
\begin{eqnarray}
\vert\varphi_j\rangle = \mathcal{A}_{\vert \varphi_j \rangle}  \vert 0\rangle^{\otimes n_j}.
\end{eqnarray}
In practice, any quantum circuit that constructs the substate $\vert \varphi_j \rangle$ from the initial state $ \vert 0\rangle^{\otimes n_j}$ can be considered a valid realization of the operator $\mathcal{A}_{\vert \varphi_j \rangle}$. Established techniques for this purpose include Top-Down amplitude encoding \cite{araujo2023configurable}, uniformly controlled rotations \cite{Mottonen2005}, and Quantum Generative Adversarial Networks (QGANs) \cite{zoufal2019quantum}.

\subsection{DEQAAA with $2\leq t\leq n$ nodes}
Having defined the substate for each node, we now specify the corresponding Boolean subfunction $f_j$ and its set of target strings $X_j$ for each node $j$. These are derived from the original global Boolean function $f(x)$ (Eq.~\ref{eq-f-definition}) and its target strings set $X_g$. 

Without loss of generality, let $X_g = \{ x^{(0)}, x^{(1)}, \dots, x^{(\vert X_g\vert-1)} \} $ denote the set of $\vert X_g\vert$ target strings for $f(x)$, where $\vert X_g\vert$ is the number of target strings in the global target set $X_g$. Each string $ x^{(k)} $ is an $n$-bit string expressed as 
\begin{eqnarray}
x^{(k)} = x _{0}^{(k)}x_{1}^{(k)}\cdots x_{n-1}^{(k)},
\end{eqnarray}
where $x_{i}^{(k)}\in\{0,1\}$. Here, $i\in\{0,1,\cdots,n-1\}$ and $k\in\{0,1,\cdots, \vert X_g\vert-1\}$.

In order to extract the target substring, we define the function $h$ as follows:
\begin{eqnarray}
h(x, s, l) = x_s x_{s+1} \ldots x_{s+l-1},
\end{eqnarray}
where $x = x_0 x_1 \ldots x_{n-1}$ is an $n$-bit binary string, $s$ is the starting index, and $l$ is the length of the substring. The constraints are $0 \leq s \leq n-l$ and $ 1 \leq l \leq n$.

In a distributed system of $t$ nodes ($2 \leq t \leq n$), the $j$-th node is allocated $n_j$ qubits. The total number of qubits across all nodes is $n$, i.e. $\sum_{j=0}^{t-1} n_j = n$. Denoting by $S_j = \sum_{a=0}^{j-1} n_a$ the cumulative qubit count for nodes preceding $j$ (with $S_0 = 0$), the local target set for node $j$ is
\begin{equation}
X_j = \left \{ h\left (x^{(i)}, S_j, n_j\right) \mid i = 0, 1, \cdots, \vert X_g\vert-1 \right\}.
\label{eq-target-set}
\end{equation}
The corresponding Boolean subfunction $f_j$ for the $j$-th node is then defined as:
\begin{equation}
f_j(x) = 
\begin{cases} 
1, & \text{if } x \in X_j, \\ 
0, & \text{otherwise}.
\end{cases}
\label{fjx}
\end{equation}
Consequently, the probability of success on this node, denoted $p_j$, is given by
\begin{equation}
p_j = \sum_{x \in X_j} P_j(x).
\label{eq-pj}
\end{equation}

The DEQAAA operates through two phases: initial parallel execution of local EQAAA across all nodes to amplify the amplitudes of candidate target basis states within their respective subspaces, followed by a global EQAAA that acts on the entire system to amplify the amplitudes of the definitive global target states.

The first phase also requires defining the $EQ_j$ operator for the $j$-th node to execute the EQAAA. The operator is defined as follows: 
\begin{equation}
    EQ_j = \mathcal{A}_{\vert \varphi_j \rangle} R_{\vert 0 \rangle ^ {\otimes n_j}} ^{\phi_j} \mathcal{A}_{\vert \varphi_j \rangle}^{\dagger} R_{f_j}^{\phi_j},
    \label{eq-EQ-j}
\end{equation}
where the operations of $R_{f_j}^{\phi_j}$ and $R_{\vert 0 \rangle ^ {\otimes n_j}} ^{\phi_j}$ are defined as
\begin{equation}
\vert x\rangle\stackrel{R_{f_j}^{\phi_j}}{\longrightarrow}
\begin{cases}
e^{i \phi_j} \vert x\rangle , & x \in X_j , \\
\vert x\rangle , & \text{otherwise},
\end{cases}
\label{eq-Rfj}
\end{equation}
and
\begin{equation}
\vert x\rangle\stackrel{R_{\vert 0 \rangle ^ {\otimes n_j}} ^{\phi_j}}{\longrightarrow}
\begin{cases}
e^{i \phi_j} \vert 0\rangle^{\otimes n_j} , & \vert x\rangle=\vert 0\rangle^{\otimes n_j} , \\
\vert x\rangle , & \text{otherwise,}
\end{cases}
\label{eq-R0j}
\end{equation}
respectively.

The operator $EQ_j$ will be iteratively applied $J_j + 1$ times, where
\begin{equation}
    J_j = \left\lfloor \frac{\pi}{4\arcsin\left(\sqrt{p_j}\right)} - \frac{1}{2} \right\rfloor.
    \label{eq-J-j}
\end{equation}

The phase angle $\phi_j$ involved in $R_{f_j}^{\phi_j}$ and $R_{\vert 0 \rangle ^ {\otimes n_j}} ^{\phi_j}$ is calculated as follows:
\begin{equation}
    \phi_j = 2 \arcsin \left(\frac{\sin \left(\frac{\pi}{4 J_j + 6}\right)}{\sqrt{p_j}}\right).
    \label{eq-angle-phi-j}
\end{equation}

Following the first phase of DEQAAA, the global system state evolves from $\vert \Psi \rangle$ to $\vert \Psi_1 \rangle$. The probability distribution $P'$ associated with $\vert \Psi_1 \rangle$ can be precisely determined based on the previous assumptions. The updated success probability, $p'_g$, is therefore defined by
\begin{equation}
p'_g= \sum_{x \in X_g} P'(x) .
\label{eq-p'g}
\end{equation}

If the condition 
\begin{eqnarray}
p'_g=1 
\end{eqnarray}
is satisfied, it indicates that the exact amplitude amplification for the original $\vert \Psi \rangle$ has been successfully accomplished through the first phase alone. Otherwise, the system proceeds to execute the second phase of the DEQAAA.

In the second phase, the entire unitary operation of the first phase is encapsulated in a new composite operator
\begin{eqnarray}
\mathcal{B} = \left[ \bigotimes_{j=0}^{t-1} \left( EQ_{j}^{J_{j}+1} \right)  \right]  \mathcal{A},
\label{unitaryB}
\end{eqnarray}
where $EQ_{j}$ and $J_{j}$ are the operators previously defined in Eq.~\eqref{eq-EQ-j} and Eq.~\eqref{eq-J-j}, respectively. A global EQAAA is then executed over the entire system,  which drives the system to the final state $\vert \Psi_2 \rangle$ by amplifying the amplitudes of the global target basis states.

Similarly, the global exact amplitude amplification operator $\widehat{EQ}$ is defined as follows: 
\begin{equation}
    \widehat{EQ}= \mathcal{B} R_{\vert 0 \rangle ^{\otimes n}} ^{\hat{\phi}} \mathcal{B}^{\dagger} R_{f}^{\hat{\phi}},
    \label{eq-EQ-hat}
\end{equation}
where the operations of $R_{f}^{\hat{\phi}}$ and $R_{\vert 0 \rangle ^{\otimes n}} ^{\hat{\phi}}$ are defined as
\begin{equation}
\vert x\rangle\stackrel{R_{f}^{\hat{\phi}}}{\longrightarrow}
\begin{cases}
e^{i \hat{\phi}} \vert x\rangle , & x \in X_g , \\
\vert x\rangle , & \text{otherwise},
\end{cases}
\label{eq-Rf-hat}
\end{equation}
and
\begin{equation}
\vert x\rangle\stackrel{R_{\vert 0 \rangle ^{\otimes n}} ^{\hat{\phi}}}{\longrightarrow}
\begin{cases}
e^{i \hat{\phi}} \vert 0\rangle^{\otimes n} , & \vert x\rangle=\vert 0\rangle^{\otimes n} , \\
\vert x\rangle , & \text{otherwise,}
\end{cases}
\label{eq-R0-hat}
\end{equation}
respectively. 

The operator $\widehat{EQ}$ will be iteratively applied $\hat{J}+1$ times, where
\begin{equation}
    \hat{J}= \left\lfloor \frac{\pi}{4\arcsin\left(\sqrt{p'_g}\right)} - \frac{1}{2} \right\rfloor,
    \label{eq-J-hat}
\end{equation}
and $p'_g$ is defined by Eq.~\eqref{eq-p'g}. 

The phase angle $\hat{\phi}$ involved in $R_{f}^{\hat{\phi}}$ and $R_{\vert 0 \rangle ^{\otimes n}} ^{\hat{\phi}}$ is calculated as follows:
\begin{equation}
    \hat{\phi} = 2 \arcsin \left(\frac{\sin \left(\frac{\pi}{4 \hat{J}+ 6}\right)}{\sqrt{p'_g}}\right).
    \label{eq-angle-phi-hat}
\end{equation}

Through this two-phase procedure, the DEQAAA achieves exact amplitude amplification of the original target states. An outline of the DEQAAA is presented in Algorithm~\ref{algo-DEQAA}, and Figure~\ref{fig-DEQAA} depicts the corresponding quantum circuit.

\begin{breakablealgorithm}
\caption{Distributed Exact Quantum Amplitude Amplification Algorithm with $2 \leq t \leq n$ computing nodes.}
\label{algo-DEQAA}
\renewcommand{\algorithmicrequire}{\textbf{Input:}} 
\renewcommand{\algorithmicensure}{\textbf{Output:}}
\begin{algorithmic}[1]
\REQUIRE{
\makebox[1.0em][r]{(1)} \parbox[t]{0.93\linewidth}{The total number of qubits $n$;} \\
\makebox[2.5em][r]{(2)} \parbox[t]{0.93\linewidth}{A unitary operator $\mathcal{A}$ that prepares the initial state $\vert\Psi\rangle$;} \\
\makebox[2.5em][r]{(3)} \parbox[t]{0.93\linewidth}{The exact probability distribution of $ \vert \Psi\rangle$ as $P$;} \\
\makebox[2.5em][r]{(4)} \parbox[t]{0.93\linewidth}{A Boolean function $f: \{0,1\}^n \rightarrow \{0,1\}$ identifying the target set $X_g$ (i.e., $f(x)=1$ for $x\in X_g$, and $0$ otherwise);} \\[0.8em]
\makebox[2.5em][r]{(5)} \parbox[t]{0.93\linewidth}{The number of computing nodes $2 \leq t \leq n$;} \\
\makebox[2.5em][r]{(6)} \parbox[t]{0.93\linewidth}{The number of qubits of the $j$-th node, denoted by $n_j$, where $j \in \{ 0, 1, \cdots, t-1 \}$ and $\sum_{j=0}^{t-1} n_j = n$;} \\
\makebox[2.5em][r]{(7)} \parbox[t]{0.93\linewidth}{The substate $\vert \varphi_j \rangle$ and its preparation unitary operator $\mathcal{A}_{\vert \varphi_j \rangle}$ according to Algorithm \ref{algo-substate};} \\
\makebox[2.5em][r]{(8)} \parbox[t]{0.93\linewidth}{The Boolean subfunction $f_j$ as in Eq.~\eqref{fjx};} \\
\makebox[2.5em][r]{(9)} \parbox[t]{0.93\linewidth}{The success probability $p_j$ as in Eq.~\eqref{eq-pj};} \\
\makebox[2.5em][r]{(10)} \parbox[t]{0.93\linewidth}{The local exact amplitude amplification operator $EQ_j$ as in Eq.~\eqref{eq-EQ-j}, along with the component operators $R_{f_j}^{\phi_j}$ and $R_{\vert 0 \rangle ^ {\otimes n_j}} ^{\phi_j}$ as in Eq.~\eqref{eq-Rfj} and Eq.~\eqref{eq-R0j}, respectively;} \\[0.4em]
\makebox[2.5em][r]{(11)} \parbox[t]{0.93\linewidth}{The number of iterations $J_j+1$ of the exact operator $EQ_j$, where $J_j$ is defined as in Eq.~\eqref{eq-J-j};} \\
\makebox[2.5em][r]{(12)} \parbox[t]{0.93\linewidth}{The phase angle $\phi_j$ as in Eq.~\eqref{eq-angle-phi-j};}\\
\makebox[2.5em][r]{(13)} \parbox[t]{0.93\linewidth}{The entire unitary operation $\mathcal{B}$ of the first phase as in Eq.~\eqref{unitaryB}.}
}

\ENSURE{The target string $x\in X_g$ with a probability of $100 \%$.}

\STATE Initialize $n$ qubits as $\vert 0\rangle^{\otimes n}$.
\STATE Apply the unitary operator $\mathcal{A}$ once and get $\vert \Psi \rangle$.
\STATE Initialize $j = 0$.

\WHILE{$j < t$}
    \STATE Initialize $k = 0$.
    \WHILE{$k < J_j + 1$}
        \STATE Apply the operator $EQ_j$ once.
        \STATE Update $k = k + 1$.
    \ENDWHILE
    \STATE Update $j = j + 1$.
\ENDWHILE

\STATE Obtain the exact probability distribution $P'$ of the current quantum state $\vert \Psi_1 \rangle$ and calculate the success probability $p'_g$, as in Eq.~\eqref{eq-p'g}.

\IF{$p'_g=1$}
    \STATE Jump to the final step;
\ELSE
    \STATE Proceed to the next step.
\ENDIF
\STATE  Define the global exact amplitude amplification operator $\widehat{EQ}$ as in Eq.~\eqref{eq-EQ-hat}.
\STATE Calculate the number of iterations $\hat{J}+1$ as in Eq.~\eqref{eq-J-hat}, where $\hat{\phi}$  is calculate as in Eq.~\eqref{eq-angle-phi-hat}.

\STATE Initialize $q = 0$.

\WHILE{$q < \hat{J} + 1$}
    \STATE Apply the operator $\widehat{EQ}$ once.
    \STATE Update $q = q + 1$.
\ENDWHILE

\STATE Measure each qubit of $\vert \Psi_2 \rangle$ in the basis $\{\vert 0\rangle, \vert 1\rangle\}$.
\end{algorithmic}
\end{breakablealgorithm}

\begin{figure}[H]
\centering
\includegraphics[width=0.97\textwidth]{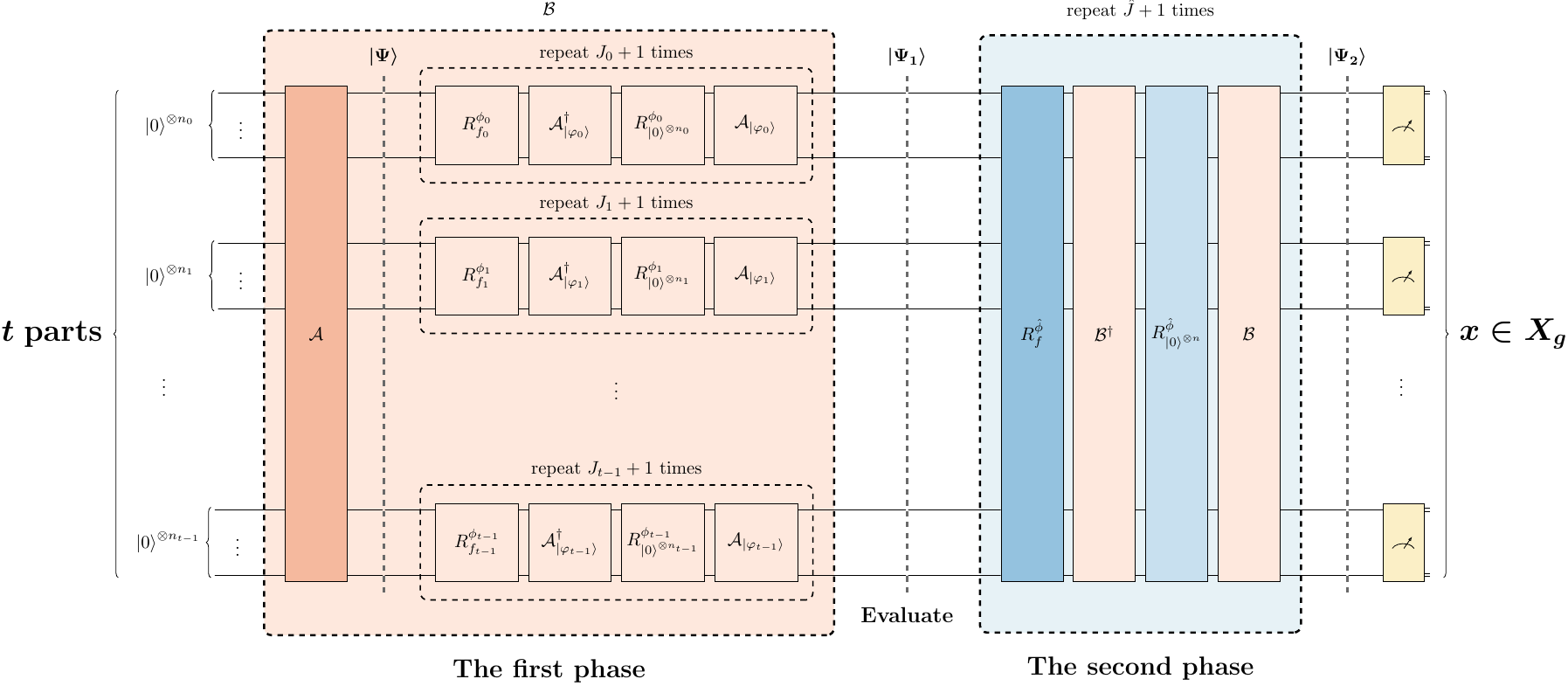}
\caption{Quantum circuit of the DEQAAA.}
\label{fig-DEQAA}
\end{figure}

\section{Analysis}\label{sec-analysis}
Our analysis comprehensively assesses the DEQAAA through three key aspects. We first validate the algorithmic correctness to ensure its theoretical soundness. We then rigorously analyze the circuit depth to quantify the advantage conferred by its distributed architecture. Finally, we benchmark the performance of our scheme against relevant centralized (QAAA, EQAAA) and distributed (existing DQAAA) counterparts.

\subsection{Correctness}
The criterion for validating the correctness of the DEQAAA is demonstrating that Algorithm \ref{algo-DEQAA} achieves exact amplitude amplification for all target strings $x$ belonging to the global target set $X_g$ culminating in a theoretical success probability of 1. This guarantee of deterministic success is formally established in Theorem \ref{proofcorrectness} below.

\begin{theorem}\label{proofcorrectness}
(\textbf{Correctness of DEQAAA}) 
Let $\vert\Psi\rangle = \mathcal{A}\vert0\rangle^{\otimes n}$ be an $n$-qubit state with an arbitrary amplitude distribution, where $\mathcal{A}$ denotes a unitary state-preparation operator. 
Let $f: \{0,1\}^n \rightarrow \{0,1\}$ be a Boolean function that identifies the target computational basis states within $\vert\Psi\rangle$ via the set $X_g = \{x\in\{0,1\}^n \vert f(x)=1\}$. 
The DEQAAA (Algorithm~\ref{algo-DEQAA}) achieves exact amplitude amplification for the target strings $x \in X_g$. Specifically, the final state $\vert \Psi_2\rangle$ satisfies:
\begin{equation}
\sum_{x \in X_g} \vert\langle x \vert \Psi_2 \rangle \vert^2 = 1,
\end{equation}
which guarantees that a measurement of $\vert\Psi_2\rangle$ yields a target string $x \in X_g$ with probability 1. 
\end{theorem}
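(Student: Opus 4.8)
The plan is to reduce the correctness of the DEQAAA to the already-established correctness of the generic EQAAA (Algorithm \ref{algo-EQAA}, analyzed in Appendix \ref{AnalysisEQAAA}) by identifying the second phase as a single EQAAA run in which the composite operator $\mathcal{B}$ of \eqref{unitaryB} plays the role of the state-preparation unitary. First I would observe that the first phase is a genuine unitary evolution: each local operator $EQ_j$ of \eqref{eq-EQ-j} is unitary, the tensor product $\bigotimes_{j=0}^{t-1} EQ_j^{J_j+1}$ acts on disjoint registers, and composing with $\mathcal{A}$ yields $\mathcal{B}$ with $\vert\Psi_1\rangle = \mathcal{B}\vert0\rangle^{\otimes n}$. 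Hence $\vert\Psi_1\rangle$ is a well-defined pure state whose exact distribution $P'$, and therefore the success probability $p'_g = \sum_{x\in X_g} P'(x)$ of \eqref{eq-p'g}, is fully determined.

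Next I would treat the two branches of the algorithm separately. If $p'_g = 1$, the defining identity $\sum_{x\in X_g}\vert\langle x\vert\Psi_1\rangle\vert^2 = p'_g = 1$ already gives the claim with $\vert\Psi_2\rangle = \vert\Psi_1\rangle$, so the second phase is correctly skipped. Otherwise $0 < p'_g < 1$, and I would invoke the two-dimensional invariant-subspace structure underlying amplitude amplification. Writing $\vert\Psi_1\rangle = \sqrt{p'_g}\,\vert\Psi_{1,g}\rangle + \sqrt{1-p'_g}\,\vert\Psi_{1,b}\rangle$, with $\vert\Psi_{1,g}\rangle$ and $\vert\Psi_{1,b}\rangle$ the normalized projections of $\vert\Psi_1\rangle$ onto the global target and non-target subspaces, the operator $R_f^{\hat\phi}$ of \eqref{eq-Rf-hat} imprints the phase $e^{i\hat\phi}$ on the $\vert\Psi_{1,g}\rangle$ component, while $\mathcal{B} R_{\vert0\rangle^{\otimes n}}^{\hat\phi}\mathcal{B}^\dagger = I + (e^{i\hat\phi}-1)\vert\Psi_1\rangle\langle\Psi_1\vert$ imprints $e^{i\hat\phi}$ on the $\vert\Psi_1\rangle$ direction, since $\mathcal{B}\vert0\rangle^{\otimes n} = \vert\Psi_1\rangle$. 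Both operators therefore preserve the two-dimensional space $\mathrm{span}\{\vert\Psi_{1,g}\rangle, \vert\Psi_{1,b}\rangle\}$, so $\widehat{EQ}$ of \eqref{eq-EQ-hat} restricts to a unitary on this plane.

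With this reduction in place, the substitution $\mathcal{A}\mapsto\mathcal{B}$, $p_g\mapsto p'_g$, $J\mapsto\hat J$, and $\phi\mapsto\hat\phi$ makes $\widehat{EQ}$ formally identical to the exact operator $EQ$ of \eqref{eq-EQ}. The choices of $\hat J$ in \eqref{eq-J-hat} and $\hat\phi$ in \eqref{eq-angle-phi-hat} are exactly the ones shown in Appendix \ref{AnalysisEQAAA} (following Long's construction) to rotate the initial vector onto $\vert\Psi_{1,g}\rangle$ after $\hat J + 1$ iterations; in particular the argument of the outer $\arcsin$ in \eqref{eq-angle-phi-hat} is at most $1$ precisely by the definition of $\hat J$, so $\hat\phi$ is real and the construction is consistent. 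Applying that result yields $\vert\Psi_2\rangle = \widehat{EQ}^{\,\hat J+1}\vert\Psi_1\rangle = e^{i\xi}\vert\Psi_{1,g}\rangle$ for some global phase $\xi$, whence $\sum_{x\in X_g}\vert\langle x\vert\Psi_2\rangle\vert^2 = 1$.

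I anticipate the main obstacle to be the bookkeeping that legitimizes the reduction rather than any new analytic content. Specifically, I must confirm that $R_f^{\hat\phi}$ and $R_{\vert0\rangle^{\otimes n}}^{\hat\phi}$ act on the global $n$-qubit target set $X_g$ and the global zero state, not on the per-node data used in the first phase, so that the relevant invariant subspace is the global target/non-target decomposition of $\vert\Psi_1\rangle$; and I must verify $p'_g > 0$ so that the parameters are well-defined. It is worth stressing that the first phase is inessential to correctness: any unitary $\mathcal{B}$ for which $\mathcal{B}\vert0\rangle^{\otimes n}$ carries nonzero target weight would suffice, and the local EQAAA rounds serve only to raise $p'_g$ and thereby shrink the expensive global iteration count $\hat J$, a point that belongs to the efficiency analysis rather than to this correctness argument.
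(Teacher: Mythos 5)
Your proposal is correct and takes essentially the same approach as the paper's own proof in Appendix \ref{prooftheorem1}: case-split on whether $p'_g = 1$, and otherwise view the second phase as a standard EQAAA run in which the composite operator $\mathcal{B}$ plays the role of the state-preparation unitary, so that the exactness analysis of Appendix \ref{AnalysisEQAAA} applies under the substitutions $\mathcal{A}\mapsto\mathcal{B}$, $p_g\mapsto p'_g$, $J\mapsto\hat{J}$, $\phi\mapsto\hat{\phi}$. Your additional bookkeeping (the explicit two-dimensional invariant subspace, the reality of $\hat{\phi}$, and the flagged requirement $p'_g>0$, which the paper does not address) merely spells out details the paper leaves implicit.
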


\begin{proof}
The detailed proof is shown in Appendix \ref{prooftheorem1}. 
\end{proof}

\subsection{Circuit depth}\label{circuitdepth}
To quantitatively demonstrate the advantages inherent in distributed quantum algorithms, we first provide a formal definition of quantum circuit depth. We then rigorously characterize the circuit depths of QAAA, EQAAA and DEQAAA, thus establishing a foundational metric for assessing the efficiency of DEQAAA relative to its centralized counterparts.

\begin{definition} \label{defdepth}
(\textbf{Depth of Circuit} \cite{zhou2023distributedEGA, zhou2025distributedEGGA}) The depth of a quantum circuit is defined as the number of time steps required to execute it, corresponding to the length of the longest continuous path of sequential quantum operations from the initial to the final state. 
\end{definition}

For example, the left circuit in Figure \ref{fig-depth}, with a depth of 1, represents a single operation. In contrast, the right circuit has a depth of 7, reflecting a longer sequential operation sequence, yet both are functionally equivalent.

\begin{figure}[H]
\centering
\includegraphics[width=0.4\linewidth]{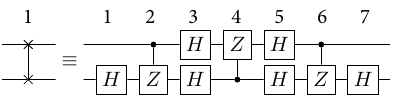}
\vspace{-1.5em}
\caption{Two quantum circuits implementing the SWAP gate.}
\label{fig-depth}
\end{figure}

Actually, the depth of a quantum circuit is contingent upon the arrangement of its constituent quantum gates. A critical step in its analysis, therefore, involves specifying the quantum circuits that realize the key operators: the phase-flip operators $S_f$ and $S_{\vert 0 \rangle ^ {\otimes n}}$ in the QAAA and the phase-rotation operators $R_{f}^{\phi}$ and $R_{\vert 0 \rangle ^ {\otimes n}} ^{\phi}$ in the EQAAA.

In the QAAA, the operator $S_f$, a key component of the amplitude amplification operator $Q$, flips the phase of a target basis state $\vert x^{(k)}  \rangle = \vert  x _{0}^{(k)}x_{1}^{(k)}\cdots x_{n-1}^{(k)}\rangle$, where $x_{i}^{(k)}\in\{0,1\}$ and $k\in\{0,1,\cdots, \vert X_g\vert-1\}$. The circuit for $S_f$ (Figure \ref{operator-Sf}) has a depth of 3 and is constructed as follows:
\begin{eqnarray}
S_f = \left(\bigotimes_{i=0} ^{n-1} X^{1-x^{(k)}_i}\right) \left(C^{n-1}Z\right) \left(\bigotimes_{i=0} ^{n-1} X^{1-x^{(k)}_i}\right).
\label{eq-Sf}
\end{eqnarray}
The multi-controlled $Z$ gate, $C^{n-1} Z$, flips the phase only of the $\vert 1\rangle ^{\otimes n}$ state. The surrounding $X$ gates, conditioned via the exponent $1-x^{(k)}_i$, effectively map the specific target basis state $\vert x^{(k)}  \rangle$ to $\vert 1\rangle ^{\otimes n}$ before the phase flip, and then map it back.

For a system with $\vert X_g\vert$ target basis states, the complete quantum circuit is constructed by sequentially applying the respective $S_f$ subcircuit for each target state. Each $S_f$ subcircuit, which marks a specific target basis state $\vert x^{(k)}  \rangle$  has a base depth of 3. Since the subcircuits are applied in series, the total circuit depth is the sum of their individual depths, yielding $3\vert X_g\vert$. This value represents the depth before any circuit-level optimizations are applied.

A notable case is when the target is the all-zero state $\vert 0\rangle ^{\otimes n}$. The corresponding circuit, which has a depth of 3, simplifies to
\begin{eqnarray}
S_{\vert 0 \rangle ^ {\otimes n}} = \left( X ^{\otimes n} \right) \left(C^{n-1}Z\right) \left( X ^{\otimes n} \right),
\label{eq-S0}
\end{eqnarray}
as depicted in Figure \ref{operator-S0}.

\begin{figure}[H]
    \centering
    \subfloat[$S_f$]
    {
        \includegraphics[width=0.4\textwidth]{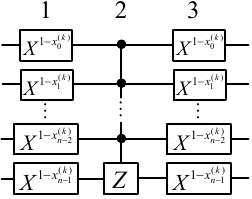}
        \label{operator-Sf}
    }
    \hspace{5pt}
    \subfloat[$S_{\vert 0 \rangle ^ {\otimes n}}$]
    {
        \includegraphics[width=0.35\textwidth]{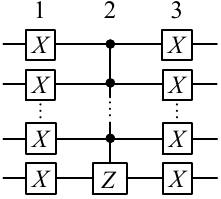}
        \label{operator-S0}
    }
    \caption{The quantum circuits corresponding to $S_f$ and $S_{\vert 0 \rangle ^ {\otimes n}}$ operators.}
    \label{operatorSfS0}
\end{figure}

The phase rotation operator $R_{f}^{\phi}$ in the EQAAA shares a similar construction principle with the $S_f$ in QAAA. Its quantum circuit implementation, depicted in Figure \ref{operator-Rf}, is formally defined for a general target basis state $\vert x^{(k)}  \rangle$ as:
\begin{eqnarray}
R_{f}^{\phi} = \left(\bigotimes_{i=0} ^{n-1} X^{1-x^{(k)}_i}\right)\left( C^{n-1}PS(\phi)\right) \left(\bigotimes_{i=0} ^{n-1} X^{1-x^{(k)}_i}\right).
\label{eq-Rf}
\end{eqnarray}

\begin{figure}[H]
    \centering
    \subfloat[$R_{f}^{\phi}$]
    {
        \includegraphics[width=0.4\textwidth]{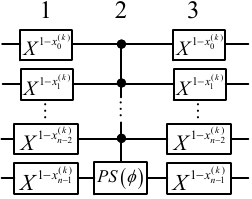}
        \label{operator-Rf}
    }
    \hspace{5pt}
    \subfloat[$R_{\vert 0 \rangle ^ {\otimes n}} ^{\phi}$]
    {
        \includegraphics[width=0.35\textwidth]{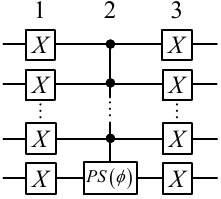}
        \label{operator-R0}
    }
    \caption{The quantum circuits corresponding to $R_{f}^{\phi}$ and $R_{\vert 0 \rangle ^ {\otimes n}} ^{\phi}$  operators.}
    \label{operatorRfR0}
\end{figure}

Here, the core phase operation is enacted by the multi-controlled phase shift gate $C^{n-1}PS(\phi)$, where the single-qubit phase gate is
\begin{equation}
 PS(\phi) = \left[
  \begin{array}{cc}
    1 & 0 \\
    0 & e^{i\phi}
  \end{array}
\right].
\end{equation}
The surrounding X gates perform the same basis transformation role.  Specifically, for the all-zero state $\vert 0\rangle ^{\otimes n}$, the construction (see Figure \ref{operator-R0}) is
\begin{eqnarray}
R_{\vert 0 \rangle ^ {\otimes n}} ^{\phi} = \left( X ^{\otimes n} \right) \left(C^{n-1}PS(\phi)\right) \left( X ^{\otimes n} \right).
\label{eq-R0}
\end{eqnarray}
These operators apply a conditional phase shift of $e^{i\phi}$ specifically to their respective target basis states, leaving other basis states unchanged.

As a supplementary note, the circuit depth of the $C^{n-1}PS(\phi)$ gate is 1. For a realistic assessment of the overall quantum circuit depth, a further decomposition of this multi-qubit gate into a sequence of native single- and two-qubit gates is necessary. One specific decomposition method will be presented in Section \ref{sec-experiment} and Appendix \ref{appx-decom}.

Next, we give the circuit depths of QAAA and EQAAA by the following theorems, respectively.

\begin{theorem}
(\textbf{Circuit Depth of QAAA}) The circuit depth of Algorithm \ref{algo-QAA} is given by
\begin{eqnarray}
\text{dep}(\text{QAAA}, p_g, \vert X_g\vert) 
= (2r + 1) \cdot \text{dep}(\mathcal{A}) + r \cdot (3\vert X_g\vert + 3),
\label{depthqaaa}
\end{eqnarray}
where
\begin{eqnarray}
    r = \left\lfloor \frac{\pi}{4\arcsin\left(\sqrt{p_g}\right)} \right\rfloor,
\end{eqnarray}
with $p_g$ denoting the probability of measuring a target string within the global target set $X_g$, and $\vert X_g\vert$ representing the number of target strings contained in $X_g$.
\label{theorem2qaaadepth}
\end{theorem}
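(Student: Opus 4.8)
The plan is to decompose the full QAAA circuit (Algorithm~\ref{algo-QAA}) into its sequential building blocks and sum their depths, exploiting the fact that under Definition~\ref{defdepth} the depth of a serial composition of subcircuits equals the sum of the individual depths. The circuit comprises a single initial application of $\mathcal{A}$ followed by $r$ identical applications of the operator $Q$, so the total depth is $\text{dep}(\mathcal{A}) + r\cdot\text{dep}(Q)$, and the entire task reduces to computing $\text{dep}(Q)$.

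First I would unpack $Q = \mathcal{A} S_{\vert 0 \rangle ^{\otimes n}} \mathcal{A}^{\dagger} S_f$ (Eq.~\eqref{eq-Q-definition}) into its four sequential factors. The operators $\mathcal{A}$ and $\mathcal{A}^{\dagger}$ each contribute depth $\text{dep}(\mathcal{A})$; here I would observe that conjugate-transposing and reversing a circuit preserves its depth, so $\text{dep}(\mathcal{A}^{\dagger}) = \text{dep}(\mathcal{A})$. The operator $S_{\vert 0 \rangle ^{\otimes n}}$, realized via Eq.~\eqref{eq-S0} as two layers of $X$ gates sandwiching a single $C^{n-1}Z$ gate, has depth $3$. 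For $S_f$, I would invoke the earlier analysis: each target basis state $\vert x^{(k)}\rangle$ is marked by an $S_f$ subcircuit of depth $3$ (Eq.~\eqref{eq-Sf}), and since the $\vert X_g\vert$ such subcircuits are applied in series, $\text{dep}(S_f) = 3\vert X_g\vert$.

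Summing these sequential contributions gives $\text{dep}(Q) = 2\,\text{dep}(\mathcal{A}) + 3\vert X_g\vert + 3$. Substituting into the total yields $\text{dep}(\mathcal{A}) + r\bigl(2\,\text{dep}(\mathcal{A}) + 3\vert X_g\vert + 3\bigr) = (2r+1)\,\text{dep}(\mathcal{A}) + r\,(3\vert X_g\vert + 3)$, which is precisely Eq.~\eqref{depthqaaa}; the value of $r$ from Eq.~\eqref{eq-r-definition} then completes the statement.

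The bookkeeping is routine, so I do not expect a substantial obstacle; the one point demanding care is the claim $\text{dep}(S_f) = 3\vert X_g\vert$. This rests on treating the multi-target $S_f$ as a strictly serial concatenation of the single-target subcircuits, with no parallelization exploited across disjoint qubits --- the worst-case, pre-optimization accounting that the surrounding text explicitly adopts. I would state this convention explicitly so that the additive depth count, and hence the final formula, is unambiguous.
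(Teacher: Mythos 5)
Your proposal is correct and follows essentially the same route as the paper's own proof: decompose the circuit as one application of $\mathcal{A}$ plus $r$ serial applications of $Q$, take $\text{dep}(S_f)=3\vert X_g\vert$ and $\text{dep}(S_{\vert 0\rangle^{\otimes n}})=3$ under the convention that $C^{n-1}Z$ has depth $1$, and sum to obtain $(2r+1)\,\text{dep}(\mathcal{A})+r\,(3\vert X_g\vert+3)$. The only cosmetic difference is that you make explicit two conventions the paper states more tersely (that $\text{dep}(\mathcal{A}^{\dagger})=\text{dep}(\mathcal{A})$ and that the multi-target $S_f$ is counted as a strictly serial concatenation), which is a welcome clarification but not a different argument.
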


\begin{proof}
The detailed proof is shown in Appendix \ref{prooftheorem2}. 
\end{proof}

\begin{theorem}
(\textbf{Circuit Depth of EQAAA}) The circuit depth of Algorithm \ref{algo-EQAA} is given by 
\begin{eqnarray}
\text{dep}(\text{EQAAA}, p_g, \vert X_g\vert)
=  \left( 2J +3 \right) \cdot \text{dep}(\mathcal{A}) + \left( J+1 \right) \cdot \left(3\vert X_g\vert+3\right),
\label{deptheqaaa}
\end{eqnarray}
where 
\begin{eqnarray}
J = \left\lfloor \frac{\pi}{4\arcsin\left(\sqrt{p_g}\right)} - \frac{1}{2} \right\rfloor,
\end{eqnarray}
with $p_g$ denoting the probability of measuring a target string within the global target set $X_g$, and $\vert X_g\vert$ representing the number of target strings contained in $X_g$.
\label{theorem3eqaaadepth}
\end{theorem}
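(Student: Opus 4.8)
The plan is to exploit the strictly sequential structure of the EQAAA circuit (Algorithm~\ref{algo-EQAA}): an initial state-preparation operator $\mathcal{A}$ applied once, followed by $J+1$ serial copies of the exact amplification operator $EQ$ defined in Eq.~\eqref{eq-EQ}. Since these blocks are composed in series and share the same $n$-qubit register, Definition~\ref{defdepth} gives the total depth as the sum of the block depths. The task therefore reduces to computing the depth of a single $EQ$ block and multiplying by $J+1$, then adding the one initial $\mathcal{A}$.

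First I would pin down the depths of the two phase-rotation components. By the construction in Eq.~\eqref{eq-Rf} (Figure~\ref{operator-Rf}), the single-target operator $R_f^{\phi}$ sandwiches the depth-$1$ gate $C^{n-1}PS(\phi)$ between two layers of $X$ gates, for a base depth of $3$; since the $\vert X_g\vert$ target basis states are each marked by one such depth-$3$ subcircuit applied in series, the depth of $R_f^{\phi}$ is $3\vert X_g\vert$. Analogously, by Eq.~\eqref{eq-R0} (Figure~\ref{operator-R0}) the operator $R_{\vert 0\rangle^{\otimes n}}^{\phi}$ marks only the all-zero state and hence has depth $3$.

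Next I would assemble the depth of one $EQ = \mathcal{A}\,R_{\vert 0\rangle^{\otimes n}}^{\phi}\,\mathcal{A}^{\dagger}\,R_f^{\phi}$. Reading the four factors in their order of application and using $\text{dep}(\mathcal{A}^{\dagger}) = \text{dep}(\mathcal{A})$, the per-iteration depth is
\begin{equation}
\text{dep}(EQ) = 2\,\text{dep}(\mathcal{A}) + 3\vert X_g\vert + 3.
\end{equation}
Summing the single initial $\mathcal{A}$ with the $J+1$ copies of $EQ$ then yields
\begin{equation}
\text{dep}(\mathcal{A}) + (J+1)\bigl(2\,\text{dep}(\mathcal{A}) + 3\vert X_g\vert + 3\bigr) = (2J+3)\,\text{dep}(\mathcal{A}) + (J+1)\bigl(3\vert X_g\vert + 3\bigr),
\end{equation}
which is precisely Eq.~\eqref{deptheqaaa}.

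The arithmetic itself is routine; the points demanding genuine care are two accounting claims rather than any hard estimate. The first is that the $\vert X_g\vert$ per-target subcircuits composing $R_f^{\phi}$ add strictly in series rather than packing partially in parallel — this holds because every $C^{n-1}PS(\phi)$ acts on all $n$ qubits, so consecutive subcircuits occupy the entire register and leave no room for overlap. The second is the depth-invariance of the adjoint, $\text{dep}(\mathcal{A}^{\dagger}) = \text{dep}(\mathcal{A})$, which I would justify from Definition~\ref{defdepth} by observing that daggering reverses the temporal order of the gate layers without altering their count. Both observations mirror those underlying Theorem~\ref{theorem2qaaadepth}; the sole structural change is that $EQ$ is iterated $J+1$ times rather than $r$ times, which propagates directly into the $(2J+3)$ and $(J+1)$ coefficients.
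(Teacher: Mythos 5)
Your proposal is correct and follows essentially the same route as the paper's proof in Appendix~E: assign depth $1$ to $C^{n-1}PS(\phi)$, obtain $\text{dep}(R_f^{\phi})=3\vert X_g\vert$ and $\text{dep}(R_{\vert 0\rangle^{\otimes n}}^{\phi})=3$, and sum $\text{dep}(\mathcal{A})+(J+1)\left(3\vert X_g\vert+2\,\text{dep}(\mathcal{A})+3\right)$ to reach Eq.~\eqref{deptheqaaa}. Your explicit justifications of the serial (non-overlapping) composition of the per-target subcircuits and of $\text{dep}(\mathcal{A}^{\dagger})=\text{dep}(\mathcal{A})$ are points the paper uses implicitly, so they strengthen rather than alter the argument.
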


\begin{proof}
The detailed proof is shown in Appendix \ref{prooftheorem3}. 
\end{proof}

It follows from the above two theorems that the circuit depths of the QAAA and the EQAAA depend on three factors, namely $\text{dep}(\mathcal{A})$, $p_g$, and $\vert X_g\vert$.

Next, the circuit depth of the DEQAAA is formally characterized by the following theorem.

\begin{theorem}\label{theorem4deqaaadepth}
(\textbf{Circuit Depth of DEQAAA}) The circuit depth of Algorithm \ref{algo-DEQAA} is given by 
\begin{eqnarray}
\text{dep}(\text{DEQAAA}) = 
\begin{cases}
\text{dep}(\text{First phase}) , & p'_g=1 , \\
\text{dep}(\text{First phase}) + \text{dep}(\text{Second phase}) , & \text{otherwise},
\end{cases}
\label{depthdeqaaa}
\end{eqnarray}
where
\begin{eqnarray}
\text{dep}(\text{First phase}) &=& \text{dep}(\mathcal{A}) + \max\limits_{j \in \{0,1,\cdots,t-1 \}} \left\{ \text{dep}(\text{$j$-th node, $p_j$, $\vert X_j\vert$})\right\},\\
\text{dep}(\text{$j$-th node, $p_j$, $\vert X_j\vert$}) &=& \left( J_j +1 \right) \cdot \left(3\vert X_j\vert + 2\text{dep}(\mathcal{A}_{\vert \varphi_j \rangle}) + 3\right),\\
J_j &=& \left\lfloor \frac{\pi}{4\arcsin\left(\sqrt{p_j}\right)} - \frac{1}{2} \right\rfloor, \\
\text{dep}(\text{Second phase}) &=&\left( \hat{J} +1 \right) \cdot \left(3\vert X_g\vert + 2\text{dep}(\text{First phase}) + 3\right),\\
 \hat{J}&=& \left\lfloor \frac{\pi}{4\arcsin\left(\sqrt{p'_g}\right)} - \frac{1}{2} \right\rfloor, 
\end{eqnarray}
with $p'_g$ denoting the success probability of the second phase, $p_j$ and $\vert X_j\vert$ representing the success probability of the $j$-th node and the number of target strings in the local target set $X_j$, respectively, and $\mathcal{A}_{\vert \varphi_j \rangle}$ being the local unitary preparation operator of the $j$-th node (generating substate $\vert\varphi_j\rangle = \mathcal{A}_{\vert \varphi_j \rangle}\vert 0\rangle^{\otimes n_j}$).
\end{theorem}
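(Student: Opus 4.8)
The plan is to decompose the DEQAAA circuit into its two constituent phases and obtain each phase depth by aggregating the depths of the elementary operators established in Section~\ref{circuitdepth}. First I would fix the depth of a single local operator $EQ_j$. By its definition in Eq.~\eqref{eq-EQ-j}, $EQ_j$ is a sequential composition of the four blocks $R_{f_j}^{\phi_j}$, $\mathcal{A}_{\vert \varphi_j \rangle}^{\dagger}$, $R_{\vert 0 \rangle ^ {\otimes n_j}} ^{\phi_j}$, and $\mathcal{A}_{\vert \varphi_j \rangle}$, all acting on the same $n_j$-qubit register. Using the circuit construction in Eq.~\eqref{eq-Rf}, the phase-rotation operator $R_{f_j}^{\phi_j}$ marking $\vert X_j\vert$ target strings is realized by serially applying one depth-$3$ subcircuit per target, giving depth $3\vert X_j\vert$; analogously $R_{\vert 0 \rangle ^ {\otimes n_j}} ^{\phi_j}$ contributes depth $3$, while the two state-preparation operators each contribute $\text{dep}(\mathcal{A}_{\vert \varphi_j \rangle})$. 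Summing the blocks yields a per-iteration depth of $3\vert X_j\vert + 2\,\text{dep}(\mathcal{A}_{\vert \varphi_j \rangle}) + 3$, and multiplying by the $J_j+1$ iterations produces $\text{dep}(\text{$j$-th node, $p_j$, $\vert X_j\vert$})$.

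Second, I would assemble the first-phase depth. The operator $\mathcal{A}$ acts once across all qubits before any node begins local amplification, so it contributes additively a term $\text{dep}(\mathcal{A})$. The $t$ local EQAAA routines act on disjoint qubit registers and therefore execute in parallel; by Definition~\ref{defdepth}, the depth of a parallel block equals the longest sequential path among its components, so the parallel stage contributes $\max_{j}\{\text{dep}(\text{$j$-th node, $p_j$, $\vert X_j\vert$})\}$. This establishes the stated formula for $\text{dep}(\text{First phase})$.

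Third, I would treat the second phase. The global operator $\widehat{EQ}$ in Eq.~\eqref{eq-EQ-hat} has the identical four-block structure as $EQ_j$, but with the composite unitary $\mathcal{B}$ (Eq.~\eqref{unitaryB}) playing the role of the state-preparation operator on all $n$ qubits. The key observation is that $\mathcal{B}$ \emph{encapsulates the entire first phase}, so $\text{dep}(\mathcal{B}) = \text{dep}(\mathcal{B}^{\dagger}) = \text{dep}(\text{First phase})$. The global phase-rotation $R_{f}^{\hat{\phi}}$ marks the $\vert X_g\vert$ global targets and contributes depth $3\vert X_g\vert$, while $R_{\vert 0 \rangle ^{\otimes n}} ^{\hat{\phi}}$ contributes $3$. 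Summing gives a per-iteration depth of $3\vert X_g\vert + 2\,\text{dep}(\text{First phase}) + 3$, which multiplied by the $\hat{J}+1$ iterations yields $\text{dep}(\text{Second phase})$.

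Finally, I would combine the phases according to the branching logic of Algorithm~\ref{algo-DEQAA}: when $p'_g = 1$ the algorithm halts after the first phase, so the total depth is $\text{dep}(\text{First phase})$; otherwise the second phase is appended sequentially on the same register, giving the sum of the two phase depths. The main obstacle I anticipate is rigorously justifying the two structural reductions, namely that serially marking $\vert X_j\vert$ (or $\vert X_g\vert$) targets accumulates depth additively rather than admitting cross-target cancellation, and that the nested composite operators $\mathcal{B}$ and $\mathcal{B}^{\dagger}$ inherit exactly the first-phase depth. Both rest on the convention, adopted throughout Section~\ref{circuitdepth}, that the reported figure is the pre-optimization circuit depth, so I would state this convention explicitly and verify that no block boundaries permit parallel merging across the sequential composition.
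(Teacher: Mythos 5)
Your proposal is correct and follows essentially the same route as the paper's proof in Appendix~F: the per-iteration depth $3\vert X_j\vert + 2\,\text{dep}(\mathcal{A}_{\vert \varphi_j \rangle}) + 3$ from the four-block structure of $EQ_j$, the max over parallel nodes plus $\text{dep}(\mathcal{A})$ for the first phase, the identification $\text{dep}(\mathcal{B}) = \text{dep}(\mathcal{B}^{\dagger}) = \text{dep}(\text{First phase})$ for the second phase, and the case split on $p'_g$. Your closing remarks on the serial-marking convention and pre-optimization depth match the conventions the paper adopts implicitly via Theorems~\ref{theorem2qaaadepth} and~\ref{theorem3eqaaadepth}, so no gap remains.
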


\begin{proof}
The detailed proof is shown in Appendix \ref{prooftheorem4}. 
\end{proof}

We have now successfully derived the circuit depth of the DEQAAA. For fixed $p_g$ and $\vert X_g\vert$, the partitioning strategy plays a pivotal role: it first directly determines the number of nodes $t$ and the number of qubits $n_j$ on the $j$-th node, thereby influencing the success probability $p_j$ of the $j$-th node and the number of target strings $\vert X_j\vert$ in the local target set $X_j$, and these two are the core factors that determine the $\text{dep}(\text{DEQAAA})$. 

In other words, based on $p_g$, $\vert X_g\vert$, and the exact probability distribution $P$ of $\vert \Psi\rangle$, selecting an appropriate partitioning strategy enables $\max_{j \in \{0,1,\cdots,t-1 \}} \left\{ \text{dep}(j\text{-th node}, p_j, \vert X_j\vert)\right\}$ to reach its minimum value, thereby achieving the optimal circuit depth of the DEQAAA. We do not elaborate further on this.

Additionally, two issues remain to be further explored in future research: first, whether $p'_g$ is sufficiently improved compared to $p_g$ after the first phase; second, beyond what threshold the decomposed circuit depth of the $C^{n-1}PS(\phi)$ gate makes $\text{dep}(\text{DEQAAA})$ significantly advantageous over those of $\text{dep}(\text{QAAA})$ and $\text{dep}(\text{EQAAA})$.

\subsection{Comparison}\label{comparison}
Within this subsection, we present a comparative analysis of DEQAAA with existing QAAAs (QAAA \cite{brassard2002quantum}, EQAAA, DQAAA \cite{hua2025distributed}), focusing on key performance metrics including circuit depth, exact solution capability, scalability, and resource requirements, which are all critical for assessing the practicality of quantum algorithms. 

DEQAAA shares core consistent features with three existing QAAAs: all support multi-target amplitude amplification, require no auxiliary qubits, and can be implemented via quantum simulation software, with consistent basic functions and implementation convenience. In addition, DEQAAA, QAAA, and EQAAA all have no stringent assumption constraints on $\mathcal{A}$, and DEQAAA belongs to the same distributed architecture category as DQAAA, with commonalities in algorithm design logic and architecture selection, providing a clear benchmark for subsequent targeted comparisons.

The key differences between DEQAAA and two single-node algorithms (QAAA, EQAAA) focus on solution accuracy and architectural scalability. QAAA only supports approximate amplitude amplification, adopts a single-node architecture (a single node hosts $n$ qubits, circuit depth see Eq.~\eqref{depthqaaa}), and as the number of qubits increases, the single-node load surges, resulting in limited scalability in large-scale scenarios. Although EQAAA supports multi-target exact solutions, it retains the single-node design (circuit depth see Eq.~\eqref{deptheqaaa}) and faces the same single-node load and scalability bottlenecks as QAAA. DEQAAA breaks through the above limitations: it not only achieves exact solution capability but also innovatively adopts a distributed architecture (flexibly configurable nodes $2 \leq t \leq n$), with the maximum qubits per node optimized to $\max\left(n_0,n_1,\cdots,n_{t-1}\right)$. The total number of qubits remains $n$, and the circuit depth can be optimized (Eq.~\eqref{depthdeqaaa}), greatly improving adaptability in large-scale scenarios while retaining core functional advantages.

Compared with DQAAA, another representative algorithm of the distributed architecture, DEQAAA achieves pivotal upgrades in scenario versatility, resource utilization efficiency, and solution accuracy. DQAAA imposes stringent assumptions on $\mathcal{A}$, only adapts to initial states with specific amplitude distributions, has limited applicable scenarios, and its number of nodes is fixed as a power of 2 ($2^j$, $1 \leq j \leq n-1$), which cannot be compatible with other node configurations. Its total number of qubits reaches $2^j(n-j)$, with significant resource redundancy, and only approximate solutions can be completed. In contrast, DEQAAA lifts the restrictive assumptions on $\mathcal{A}$, enabling compatibility with arbitrary initial states and flexible node configurations in the range $2 \leq t \leq n$. Critically, it maintains a fixed total qubit count of $n$, which markedly enhances resource utilization efficiency. Furthermore, DEQAAA supports exact amplitude amplification, and its circuit depth can be further optimized via qubit partitioning strategies, thereby yielding significant improvements in comprehensive performance and scenario adaptability over DQAAA.

In summary, through distributed architecture optimization, DEQAAA retains core functional advantages while solving the scalability bottlenecks of single-node algorithms (e.g., QAAA, EQAAA) and overcoming the issues of insufficient versatility and resource redundancy of distributed algorithms (e.g., DQAAA), achieving better comprehensive performance. For a more straightforward benchmarking of different methods, the following table is provided (see Table \ref{algorithmscompare}).

\begin{table}[H]
  \centering
  \caption{Comparison between DEQAAA and existing QAAAs.}
\scalebox{0.8}{
  \begin{tabular}{lccccc}
    \toprule
Performance Metrics & QAAA \cite{brassard2002quantum} & EQAAA & DQAAA \cite{hua2025distributed} & \textbf{DEQAAA}  \\
    \midrule
\text{1. Amplitude amplification for multiple targets} &\text{Yes} &\text{Yes}&\text{Yes}&\textbf{Yes}\\
\text{2. Exact solution achievement}
&\text{No}&\text{Yes}&\text{No}&\textbf{Yes} \\
\text{3. Stringent assumptions for $\mathcal{A}$}
&\text{No}&\text{No}&\text{Yes}&\textbf{No} \\
\text{4. Arbitrary amplitude distribution of initial state $\vert \Psi\rangle$}
&\text{Yes}&\text{Yes}&\text{No}&\textbf{Yes} \\
\text{5. Number of computing nodes} &$1$&$1$& $2^j$ ($1 \leq j \leq n-1$)&$\bm{2 \leq t \leq n}$ \\
\text{6. Maximum qubits at a single node} &$n$ &$n$&$n-j$&$\bm{ \max\left(n_0,n_1,\cdots,n_{t-1}\right)}$\\
\text{7. Total number of qubits}&$n$&$n$&$2^j(n-j)$&$\bm{n}$\\
\text{8. Requirement for auxiliary qubits}& \text{No}&\text{No}&\text{No}&\textbf{No}\\
\text{9. Implementation via quantum simulation software}
&\text{Yes}&\text{Yes}&\text{Yes}&\textbf{Yes} \\
\text{10. Quantum circuit depth}
&Eq.~\eqref{depthqaaa}&Eq.~\eqref{deptheqaaa}&-&\textbf{Eq.~\eqref{depthdeqaaa} }\\
    \bottomrule
  \end{tabular}}
\label{algorithmscompare}
\end{table}

\section{Experiment}\label{sec-experiment}
This section elaborates on the workflows and simulation results of QAAA, EQAAA, and DEQAAA, respectively, and verifies the correctness and effectiveness of DEQAAA through the experimental results. All experiments were implemented based on the MindSpore Quantum framework \cite{mq_2021, xu2024mindspore}.

A preliminary note is warranted: the readout orders of qubits in theory and experiments are reversed due to the default little-endian read/write mode of MindSpore Quantum. Specifically, the theoretical qubit $q_i$ corresponds to the experimental qubit $q_{n-1-i}$. It is emphasized that this discrepancy neither affects the computational processes or results of the aforementioned formulas nor hinders the comprehension of the case study implementation. The only implication is that the node positions adopted in subsequent experiments are reversed relative to the theoretical configuration.

\subsection{The Probability Distribution}
To begin with, assume an operator $\mathcal{A}$ (shown in Figure \ref{fig-example-A-circuit}) prepares the 4-qubit state $\vert\Psi\rangle = \mathcal{A}\vert0\rangle^{\otimes 4}$, with $\vert\Psi\rangle$ expressed as:
\begin{eqnarray}
\vert\Psi\rangle&=&\mathcal{A}\vert 0\rangle^{\otimes 4}\\
&=&0.1506\vert0000\rangle+0.1908\vert0001\rangle+0.3120\vert0010\rangle+0.1788\vert0011\rangle\nonumber \\
&+&0.2055\vert0100\rangle+0.2719\vert0101\rangle+0.2793\vert0110\rangle+0.2273\vert0111\rangle\nonumber \\
&+&0.3164\vert1000\rangle+0.2719\vert1001\rangle+0.3180\vert1010\rangle+0.2207\vert1011\rangle\nonumber \\
&+&0.1860\vert1100\rangle+0.2572\vert1101\rangle+0.3046\vert1110\rangle+0.2200\vert1111\rangle.
\label{eq-exact-distribution}
\end{eqnarray}

\begin{figure}[H]
\centering
\includegraphics[width=0.7\textwidth]{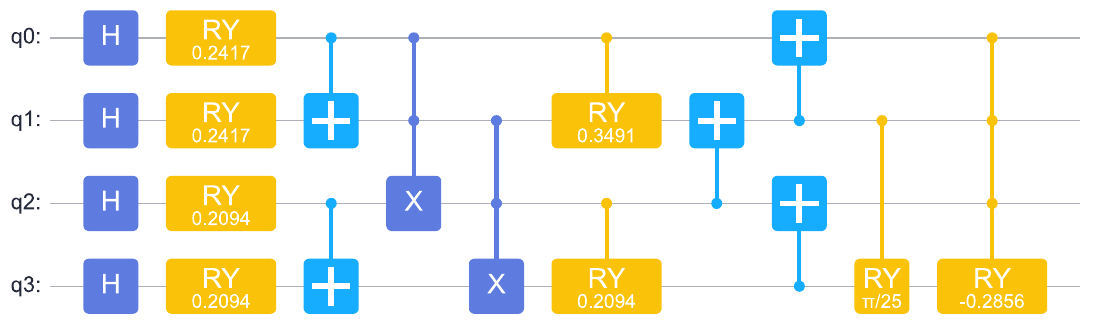}
\caption{Quantum circuit of the operator $\mathcal{A}$ to prepare the 4-qubit state $\vert\Psi\rangle$.}
\label{fig-example-A-circuit}
\end{figure}

Suppose the global target set is defined as $X_g=\{8,14\}=\{1000,1110\}$. Based on the amplitude distribution of $\vert\Psi\rangle$, the target measurement probability (success probability) is:
\begin{equation}
    p_g=0.3164^2+0.3046^2=0.1929.
    \label{eq-example-pg}
\end{equation}

It is well-established that the square of the amplitude of a quantum state corresponds to the measurement probability, so if the amplitude distribution of $\vert\Psi\rangle$ can be precisely acquired, the exact probability distribution $P$ can be derived by squaring the amplitudes. However, this is hardly achievable in practical physical experiments. In practice, repeated measurements on $\vert\Psi\rangle$ yield the approximate probability distribution $\tilde{P}$, and the more measurements are conducted, the closer $\tilde{P}$ converges to $P$. To quantify the impact of measurement count on precision, we use the Kullback–Leibler (KL) divergence \cite{Kullback1951OnIA}, defined as:
\begin{equation}
D_{KL}(\tilde{P}\vert\vert P)=\sum_{x}\tilde{P}(x)\log\frac{\tilde{P}(x)}{P(x)}.
\label{eq-KLdivergence}
\end{equation}
A smaller $D_{KL}(\tilde{P}\vert\vert P)$ (closer to 0) indicates better consistency between $\tilde{P}$ and $P$.

The simulator enables repeated measurements of $\vert\Psi\rangle$ to generate $\tilde{P}$. The distribution plots for 10,000 and 100,000 measurements are presented in Figure \ref{fig-example-A-result-1w} and Figure \ref{fig-example-A-result-10w}, respectively.
\begin{figure}[H]
    \centering
    \subfloat[10,000 times]
    {
        \includegraphics[width=0.4\textwidth]{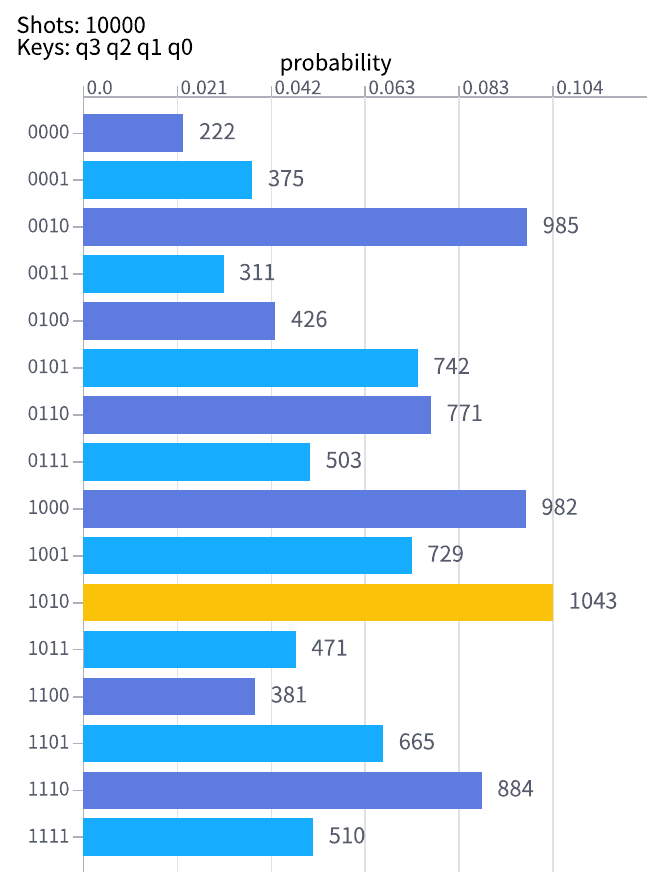}
        \label{fig-example-A-result-1w}
    }
    \hspace{5pt}
    \subfloat[100,000 times]
    {
        \includegraphics[width=0.4\textwidth]{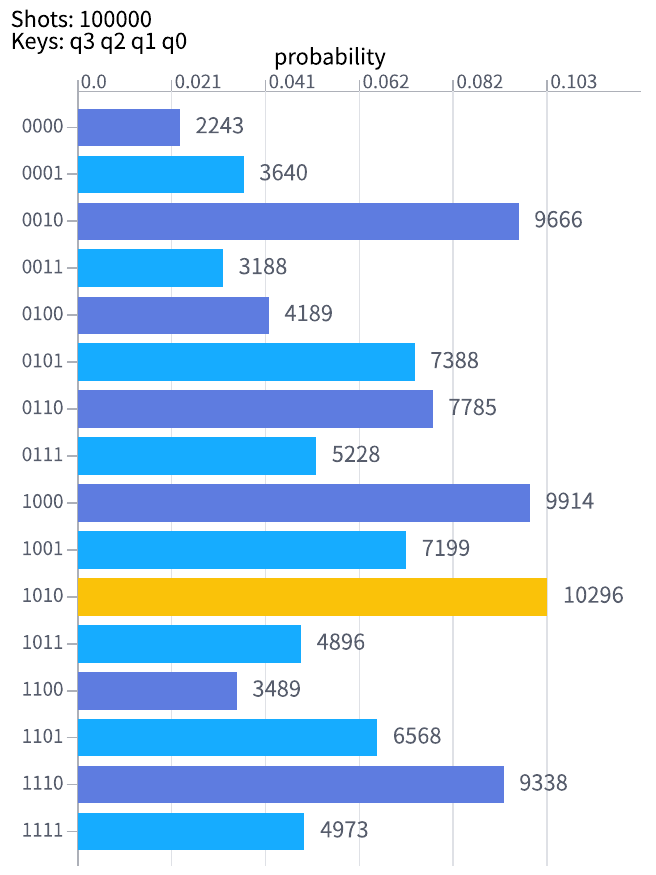}
        \label{fig-example-A-result-10w}
    }
    \caption{The distribution plots of measurement outcomes of $\vert\Psi\rangle$.}
    \label{fig-example-A-result}
\end{figure}

It should be briefly noted that the simulator supports random seed configuration. Different seeds generally yield distinct probability distributions, while the same seed ensures the reproducibility of identical distributions. Here, the random seed is uniformly set to 21 for consistency.

As calculated, the KL divergences between the approximate distributions (for 10,000 and 100,000 measurements) and the exact probability distribution $P$ (derived from squaring the amplitudes of $\vert\Psi\rangle$ in Eq.~\eqref{eq-exact-distribution}) are as follows:
\begin{eqnarray}
    D_{KL}^{(10,000)}&=5.1208\times 10^{-4},\\
    D_{KL}^{(100,000)}&=7.9374\times 10^{-5}.
\end{eqnarray}
As demonstrated, the latter (for 100,000 measurements) aligns more closely with the theoretical probability distribution. Hence, obtaining an acceptable approximation of the probability distribution hinges on repeated measurements, a requirement that is circumvented in this study by the premise that the distribution can be determined exactly.

\subsection{The 4-qubit QAAA with $X_g=\{8, 14\}$}
According to Algorithm \ref{algo-QAA}, the quantum circuit for this case can be constructed, as shown in Figure \ref{fig-example-QAA-circuit}. Specifically, the circuits described in Eq.~\eqref{eq-Sf} and Eq.~\eqref{eq-S0} are employed to build the phase-flip operators. 

\begin{figure}[H]
\centering
\includegraphics[width=\textwidth]{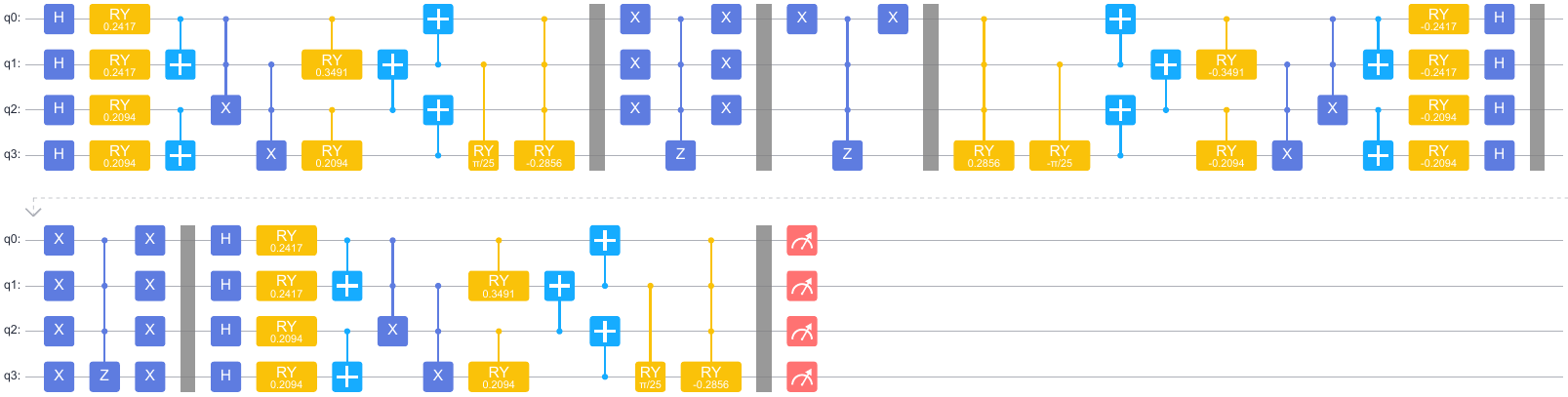}
\vspace{-1.8em}
\caption{Quantum circuit of the 4-qubit QAAA with the set of target strings $X_g=\{8, 14\}$. The number of repetitions is $r=\left\lfloor\pi /\left(4\arcsin\left(\sqrt{p_g}\right)\right)\right\rfloor=1$. The number of gates is 76 and the circuit depth is 39.}
\label{fig-example-QAA-circuit}
\end{figure}

By measuring the circuit in Figure \ref{fig-example-QAA-circuit} 10,000 times, the measurement results are presented in Figure \ref{fig-example-QAA-result}. Specifically, QAAA amplifies the amplitudes of the target states $\vert 1000\rangle$ and $\vert 1110\rangle$, increasing the success probability from the initial $0.1929$ to $0.9595$.

\begin{figure}[H]
\centering
\includegraphics[width=0.37\textwidth]{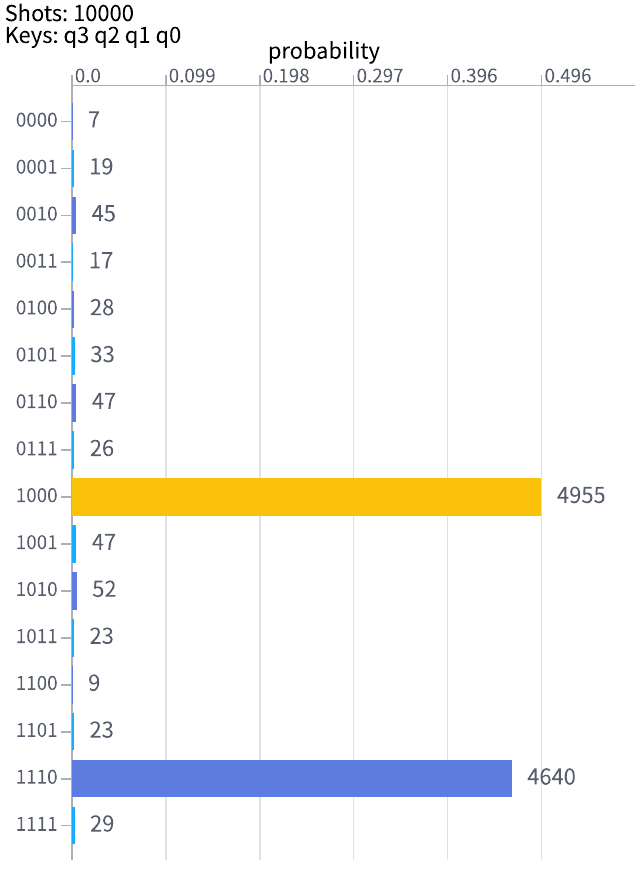}
\caption{Measurement distribution of $\vert\Psi\rangle$ after QAAA (10,000 measurements).}
\label{fig-example-QAA-result}
\end{figure}

\subsection{The 4-qubit EQAAA with $X_g=\{8, 14\}$}
We then employ EQAAA to enhance the success probability to 1. Similarly, the complete quantum circuit for this case is constructed via Algorithm \ref{algo-EQAA}, as shown in Figure \ref{fig-example-EQAA-circuit}. Specifically, the phase rotation operators are built following the circuits described in Eq.~\eqref{eq-Rf} and Eq.~\eqref{eq-R0}.

\begin{figure}[H]
\centering
\includegraphics[width=\textwidth]{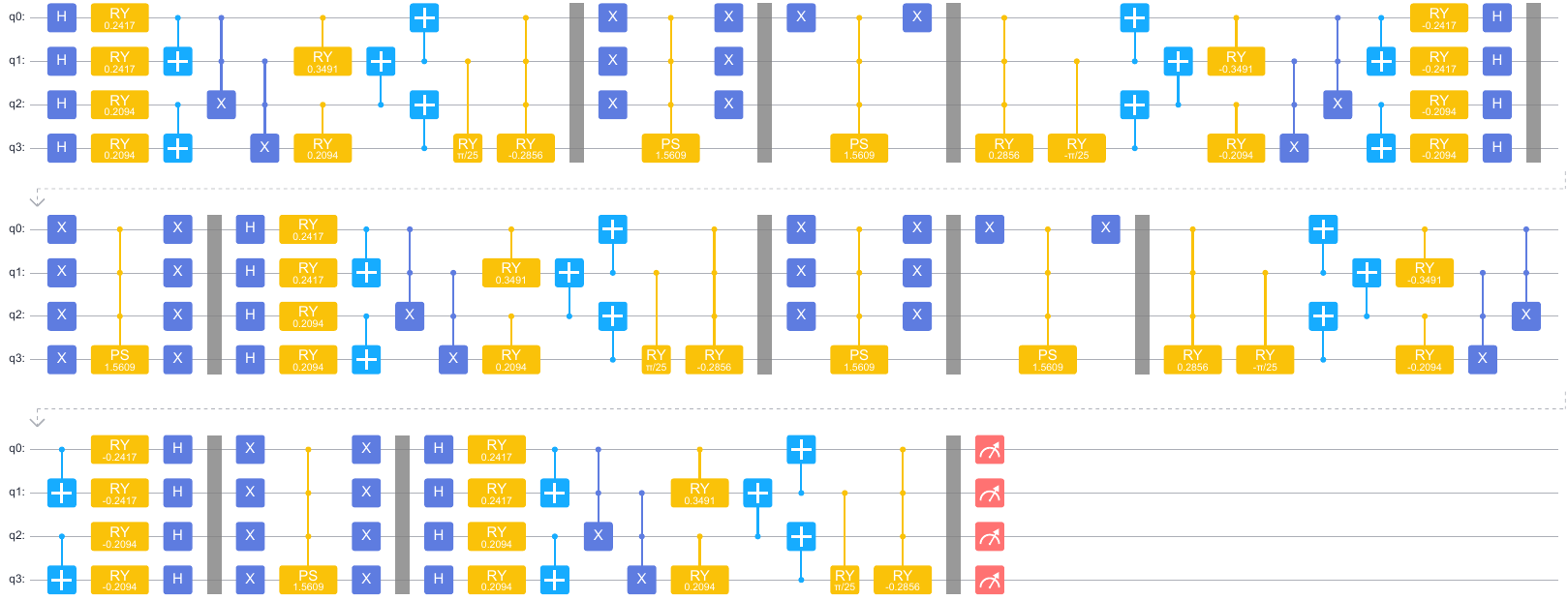}
\vspace{-1.8em}
\caption{Quantum circuit of the 4-qubit EQAAA with the set of target strings $X_g=\{8, 14\}$. The number of repetitions is $J+1=\left\lfloor \pi/\left(4\arcsin\left(\sqrt{p_g}\right)\right) - 1/2 \right\rfloor+1=2$, and the phase angle is $\phi = 2 \arcsin \left(\frac{\sin \left(\frac{\pi}{4 J + 6}\right)}{\sqrt{p_g}}\right)=1.5609$. The number of gates is 133 and the circuit depth is 68.}
\label{fig-example-EQAA-circuit}
\end{figure}

By measuring the circuit in Figure \ref{fig-example-EQAA-circuit} 10,000 times, the measurement results are presented in Figure \ref{fig-example-EQAA-result}. Specifically, EQAAA achieves exact amplitude amplification of the target states $\vert 1000\rangle$ and $\vert 1110\rangle$, realizing a success probability of 1.

\begin{figure}[H]
\centering
\includegraphics[width=0.45\textwidth]{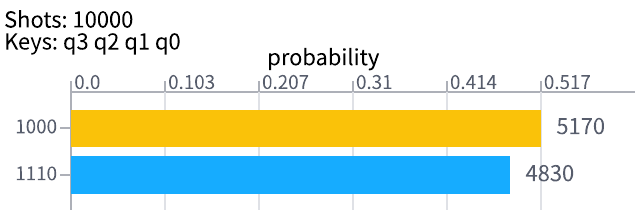}
\caption{Measurement distribution of $\vert\Psi\rangle$ after EQAAA (10,000 measurements).}
\label{fig-example-EQAA-result}
\end{figure}

\subsection{The 4-qubit DEQAAA with $t=2$ nodes and $X_g=\{8, 14\}$}
Lastly, we extend our exploration to the application of DEQAAA for this example. Specifically, we assume a small-scale distributed system consisting of $t=2$ quantum computing nodes, where each node is equipped with $n_0 = n_1 = 2$ qubits, resulting in a total of $n=4$ computing qubits.

First, following the procedure outlined in Algorithm \ref{algo-substate}, the substate for each node is computed as follows:
\begin{eqnarray}
    \vert\varphi_0\rangle&=&0.4340\vert 00\rangle+0.4958\vert 01\rangle+0.5691\vert 10\rangle+0.4919\vert 11\rangle,\\
    \vert\varphi_1\rangle&=&0.4468\vert 00\rangle+0.5004\vert 01\rangle+0.6077\vert 10\rangle+0.4251\vert 11\rangle.
\end{eqnarray}
According to Eq.~\eqref{eq-target-set}, the local target sets for each node are directly derived from $X_g=\{8, 14\}=\{1000,1110\}$:
\begin{equation}
    X_0=\{10,11\},\quad X_1=\{00,10\}.
\end{equation}
Further, according to Eq.~\eqref{eq-pj}, the probabilities for each node are calculated as the sum of the probabilities of measuring each basis state within their respective local target sets:
\begin{eqnarray}
    p_0&=&\vert0.5691\vert^2+\vert0.4919\vert^2=0.5658,\\
    p_1&=&\vert0.4468\vert^2+\vert0.6077\vert^2=0.5689.
\end{eqnarray}

Subsequently, local EQAAA is executed on each node independently, resulting in the quantum circuit for the first phase as illustrated in Figure \ref{fig-example-DEQAA-stage1-circuit}. It is worth noting that the operators $\mathcal{A}_{\vert\varphi_0\rangle}$ and $\mathcal{A}_{\vert\varphi_1\rangle}$ employed for substate generation are not unique. In this work, we opt to adopt Top-Down amplitude encoding circuits \cite{araujo2023configurable} to realize these operators.

\begin{figure}[H]
\centering
\includegraphics[width=\textwidth]{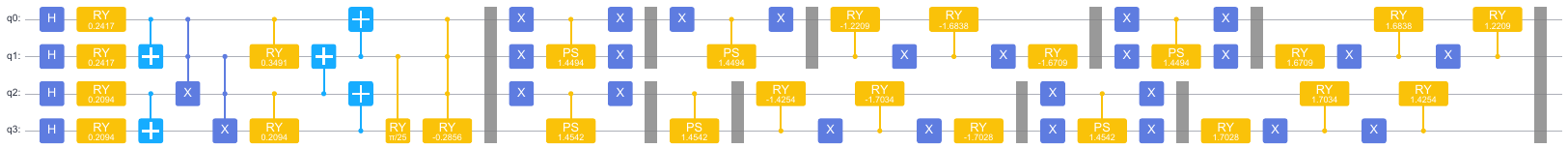}
\vspace{-1.8em}
\caption{Quantum circuit of 4-qubit DEQAAA (first phase) with $t=2$ nodes and target set $X_g=\{8, 14\}$. The repetition numbers are $J_0+1= \left\lfloor \frac{\pi}{4\arcsin\left(\sqrt{p_0}\right)} - \frac{1}{2} \right\rfloor+1=1$ and $J_1+1= \left\lfloor \frac{\pi}{4\arcsin\left(\sqrt{p_1}\right)} - \frac{1}{2} \right\rfloor+1=1$, and the phase angles are $\phi_0 = 2 \arcsin \left(\frac{\sin \left(\frac{\pi}{4 J_0 + 6}\right)}{\sqrt{p_0}}\right)=1.4542$ and $\phi_1 = 2 \arcsin \left(\frac{\sin \left(\frac{\pi}{4 J_1 + 6}\right)}{\sqrt{p_1}}\right)=1.4494$, respectively.}
\label{fig-example-DEQAA-stage1-circuit}
\end{figure}

By measuring the circuit in Figure \ref{fig-example-DEQAA-stage1-circuit} 10,000 times, the measurement results are presented in Figure \ref{fig-example-DEQAA-stage1-result}. It can be observed that the first phase of DEQAAA not only amplifies the amplitudes of $\vert 1000\rangle$ and $\vert 1110\rangle$ (the global target states) but also enhances those of $\vert1010\rangle$ and $\vert1100\rangle$.

\begin{figure}[H]
\centering
\includegraphics[width=0.36\textwidth]{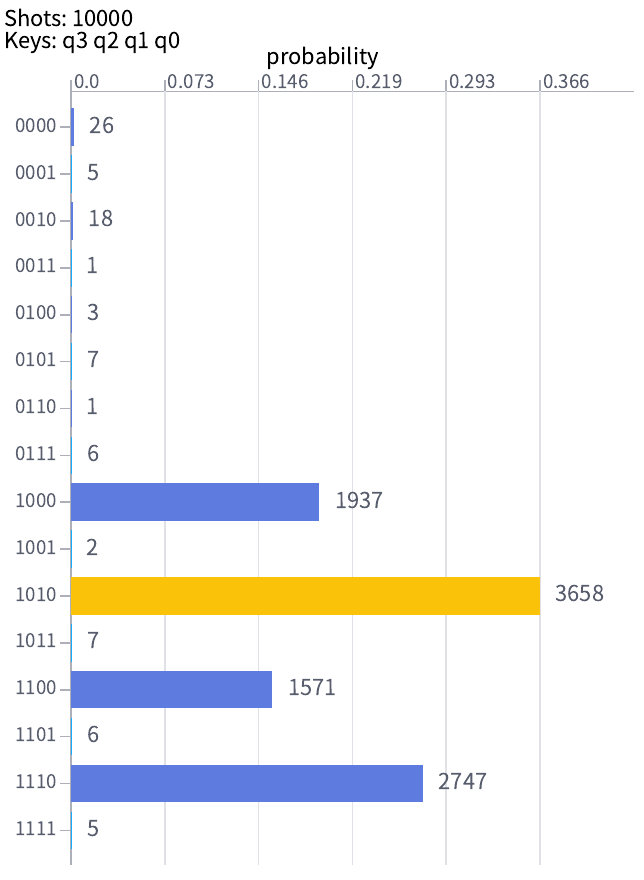}
\caption{Measurement distribution of $\vert\Psi\rangle$ after the first phase of DEQAAA (10,000 measurements).}
\label{fig-example-DEQAA-stage1-result}
\end{figure}

At this point, the circuit of the first phase is treated as a whole, corresponding to operator $\mathcal{B}$ in Figure \ref{fig-DEQAA}. The quantum state of the whole system then becomes
\begin{eqnarray}
\vert\Phi_1\rangle&=&\mathcal{B}\vert 0\rangle^{\otimes 4} \\
&=&(0.0106-0.0540j)\vert0000\rangle+(-0.0279-0.0107j)\vert0001\rangle +(-0.0051+0.0373j)\vert0010\rangle \nonumber \\
&+&(-0.0066-0.0051j)\vert0011\rangle+(0.0172-0.0014j)\vert0100\rangle+(0.0234+0.0181j)\vert0101\rangle \nonumber \\
&+&(-0.0137+0.0029j)\vert0110\rangle+(0.0166+0.0200j)\vert0111\rangle+(-0.0840-0.4318j)\vert1000\rangle \nonumber \\
&+&(0.0065-0.0122j)\vert1001\rangle+(-0.0477-0.6079j)\vert1010\rangle+(0.0048-0.0153j)\vert1011\rangle \nonumber \\
&+&(-0.0049-0.3907j)\vert1100\rangle+(-0.0111+0.0149j)\vert1101\rangle+(-0.0936-0.5143j)\vert1110\rangle \nonumber \\
&+&(-0.0089+0.0187j)\vert1111\rangle,
\end{eqnarray}
and the new global success probability is calculated as:
\begin{equation}
    p_g'=\vert-0.0840-0.4318j\vert^2+\vert -0.0936-0.5143j\vert^2=0.4667,
\end{equation}
where $j$ denotes the imaginary unit.

Although the global success probability of the target strings has increased from the initial $0.1929$ to $0.4667$ after the first phase of DEQAAA, exact amplitude amplification of the target strings as required has not yet been achieved. Thus, it is necessary to proceed to the second phase, where EQAAA is applied to the entire distributed system, yielding the complete DEQAAA circuit illustrated in Figure \ref{fig-example-DEQAA-circuit}.

\begin{figure}[H]
\centering
\includegraphics[width=\textwidth]{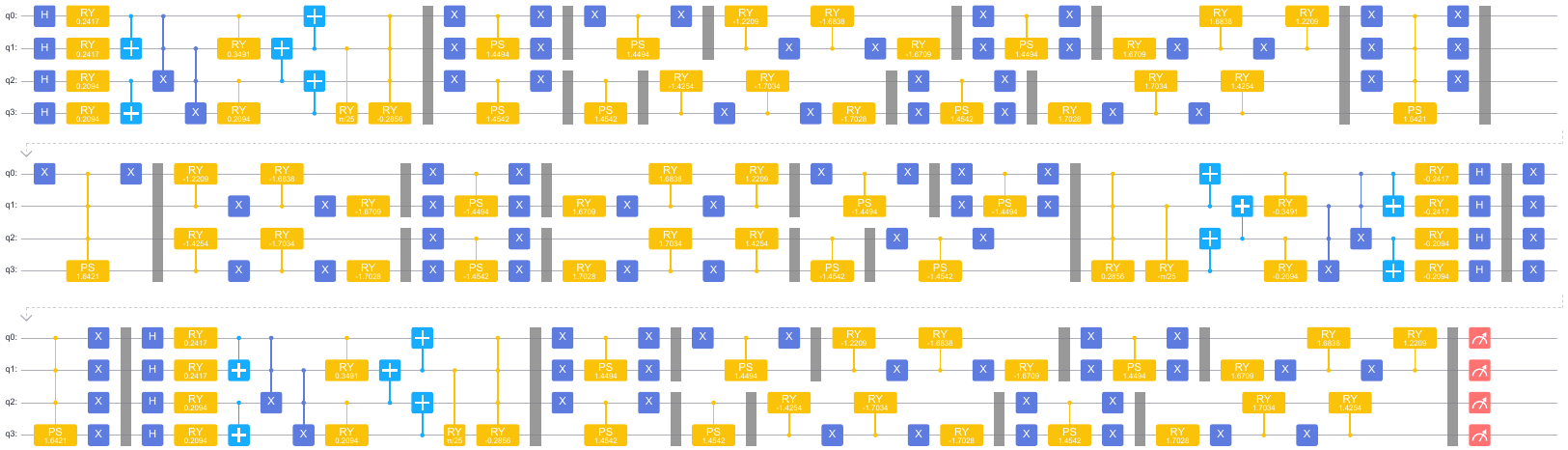}
\vspace{-1.8em}
\caption{Complete quantum circuit of the 4-qubit DEQAAA (first $\&$ second phases) with $t=2$ nodes and the set of target strings $X_g=\{8, 14\}$. In the second phase, the number of repetitions is $\hat{J}+1= \left\lfloor \frac{\pi}{4\arcsin\left(\sqrt{p'_g}\right)} - \frac{1}{2} \right\rfloor+1=1$, and the phase angle is $\hat{\phi} = 2 \arcsin \left(\frac{\sin \left(\frac{\pi}{4 \hat{J}+ 6}\right)}{\sqrt{p'_g}}\right)=1.6421$. The number of gates is 202 and the circuit depth is 96.}
\label{fig-example-DEQAA-circuit}
\end{figure}

By measuring the circuit in Figure \ref{fig-example-DEQAA-circuit} 10,000 times, the measurement results are presented in Figure \ref{fig-example-DEQAA-result}. It can be observed that the execution of the second phase successfully locks the target states to the global target set $X_g=\{1000,1110\}$, realizing exact amplitude amplification as intended.

\begin{figure}[H]
\centering
\includegraphics[width=0.45\textwidth]{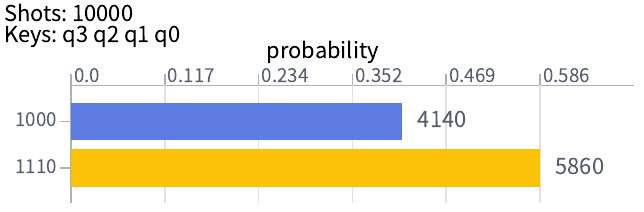}
\caption{Measurement distribution of $\vert\Psi\rangle$ after DEQAAA (10,000 measurements).}
\label{fig-example-DEQAA-result}
\end{figure}

\subsection{Data comparison}\label{sec-datacomparison}
For a comprehensive performance comparison of the three algorithms, we present a summary of the key experimental results across multiple dimensions in Table \ref{table-example-compare}. Note that the global target set $\{8, 14\}$ corresponds to $\{1000,1110\}$ in binary representation.

\begin{table}[H]
  \centering
  \caption{Comparison between 4-qubit QAAA, EQAAA and DEQAAA in this example.}
\scalebox{0.83}{
  \begin{tabular}{lccc}
    \toprule
Performance Metrics & QAAA & EQAAA & \textbf{DEQAAA} \\
    \midrule
\text{1. Number of computing nodes} & 1 & 1 & \textbf{2} \\

\text{2. Quantum circuit} & Figure \ref{fig-example-QAA-circuit} & Figure \ref{fig-example-EQAA-circuit} & \textbf{Figure \ref{fig-example-DEQAA-circuit}} \\

\text{3. Maximum qubits at a single node} & 4 & 4 & \textbf{2} \\

\text{4. Total number of qubits} & 4 & 4 & \textbf{4} \\

\text{5. Global target set $X_g$} & $\{8,14\}$ & $\{8,14\}$ & $\bm{\{8,14\}}$ \\

\text{6. Initial $p_g$ value} & 0.1929 & 0.1929 & \textbf{0.1929} \\

\text{7. Repetition number of amplification operators} & 1 & 2 & \textbf{1} \\

\text{8. Random seed value} & 21 & 21 & \textbf{21} \\

\text{9. Sampling results} & Figure \ref{fig-example-QAA-result} & Figure \ref{fig-example-EQAA-result} & \textbf{Figure \ref{fig-example-DEQAA-result}} \\

\text{10. Final success probability}
& 0.9595 & 1 & \textbf{1} \\

\text{11. Exact amplification} & No & Yes & \textbf{Yes} \\
    \bottomrule
  \end{tabular}}
\label{table-example-compare}
\end{table}

All experiments were conducted under unified conditions to guarantee comparative rigor. Three key conclusions are derived: (1) DEQAAA features a distributed architecture, reducing the maximum qubit demand per node to 2 while retaining a total of 4 qubits, thereby distinguishing itself from the single-node QAAA and EQAAA that each require 4 qubits. (2) DEQAAA accomplishes amplitude amplification with only 1 repetition of the amplification operator, matching the efficiency of QAAA and requiring 1 fewer repetition than EQAAA, which needs 2. (3) DEQAAA achieves exact amplitude amplification with a final success probability of 1, outperforming QAAA which only reaches 0.9595 and fails to realize exact amplification.

\begin{figure}[H]
    \centering
    \subfloat[4-Qubit]
    {
        \includegraphics[width=0.97\textwidth]{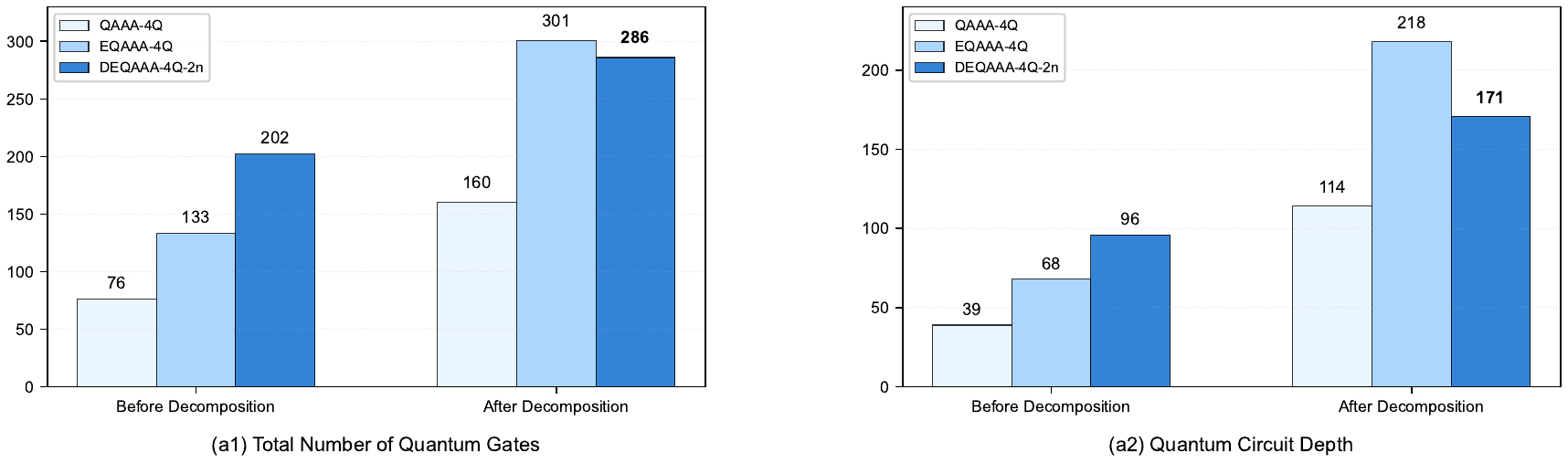}
        \label{fig-example-compare-4}
    }
    \vspace{1pt}
    \subfloat[6-Qubit]
    {
        \includegraphics[width=0.97\textwidth]{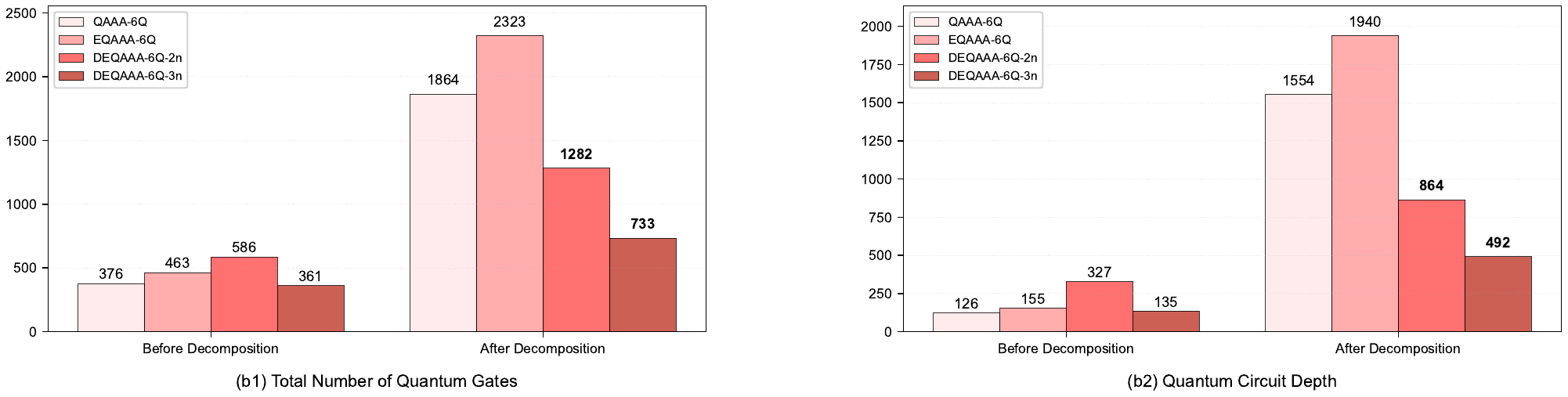}
        \label{fig-example-compare-6}
    }
    \vspace{1pt}
    \subfloat[8-Qubit]
    {
        \includegraphics[width=0.97\textwidth]{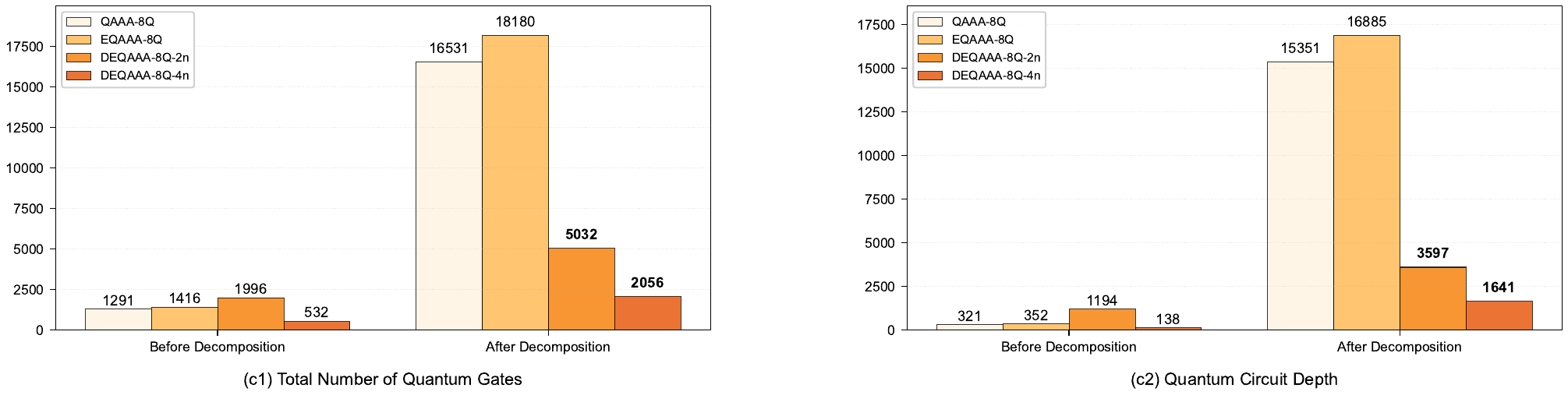}
        \label{fig-example-compare-8}
    }
    \vspace{1pt}
    \subfloat[10-Qubit]
    {
        \includegraphics[width=0.97\textwidth]{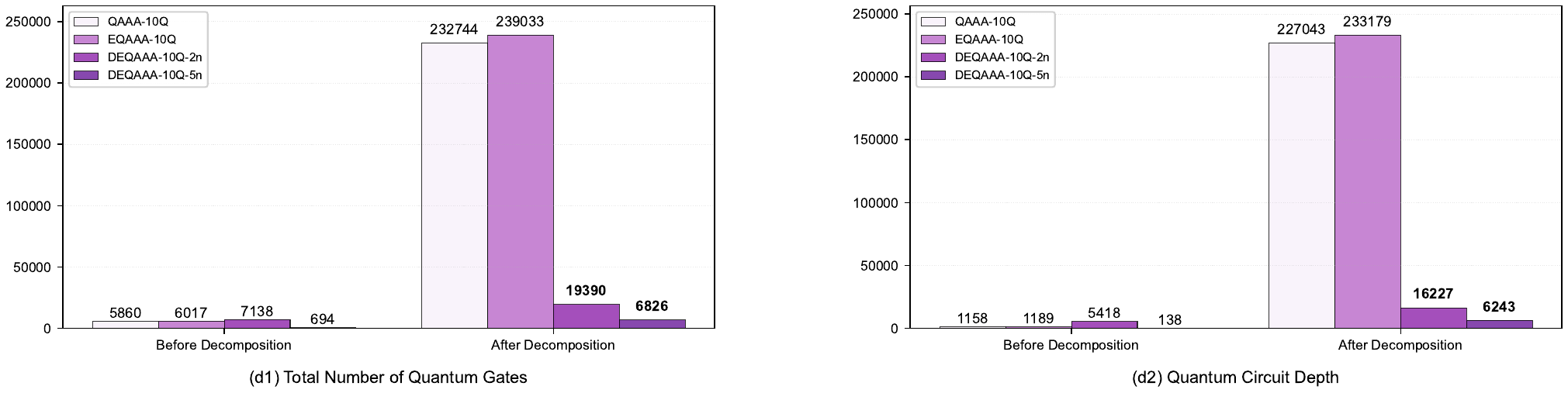}
        \label{fig-example-compare-10}
    }
    \caption{Comparison of the computational complexity of different QAAAs before and after decomposition.}
    \label{fig-example-compare}
\end{figure}

To further quantitatively demonstrate the intrinsic advantages of distributed quantum algorithms, we statistically compare the number of quantum gates and quantum circuit depth among the different algorithms. However, from the comparison in Figure \ref{fig-example-compare-4}, 4-qubit DEQAAA does not show advantages. We observe that the number of gates and circuit depth of 4-qubit DEQAAA are higher than those of the other two algorithms. This is because we count both 4-qubit and single-qubit gates as individual gates. In practice, a fair comparison requires decomposing multi-control gates into combinations of single- and two-qubit gates. Detailed information on the decomposition of $C^{3}Z$ and $C^{3}PS$ is provided in Appendix \ref{appx-decom} for reference.

After re-statistical analysis following decomposition, DEQAAA begins to show merits in both metrics, with 286 quantum gates and a circuit depth of 171. These merits stem from the reduced execution of multi-qubit gates, a key benefit brought by the distributed architecture of DEQAAA.

To validate the superiority of DEQAAA in terms of computational complexity, we extended our tests to 6-, 8-, and 10-qubit scenarios. The experimental parameters from these tests are summarized in Table \ref{table-example-compare-all-qubits}, with corresponding statistical results of quantum gate count and quantum circuit depth visualized in Figures \ref{fig-example-compare-6}, \ref{fig-example-compare-8}, and \ref{fig-example-compare-10}, respectively.

\begin{table}[H]
  \centering
  \caption{Comparison between QAAA, EQAAA and DEQAAA in 6-qubit, 8-qubit and 10-qubit examples.}
\scalebox{0.73}{
  \begin{tabular}{lccccccccc}
    \toprule
\multirow{2}{*}{Performance Metrics} & \multicolumn{3}{c}{6-qubit} & \multicolumn{3}{c}{8-qubit} & \multicolumn{3}{c}{10-qubit} \\
\cmidrule(lr){2-4} \cmidrule(lr){5-7} \cmidrule(lr){8-10}
& QAAA & EQAAA & \textbf{DEQAAA} & QAAA & EQAAA & \textbf{DEQAAA} & QAAA & EQAAA & \textbf{DEQAAA} \\
    \midrule
\text{1. Global target set $X_g$} & $\{8, 14\}$ & $\{8, 14\}$ & $\bm{\{8, 14\}}$ & $\{8, 14\}$ & $\{8, 14\}$ & $\bm{\{8, 14\}}$ & $\{8, 14\}$ & $\{8, 14\}$ & $\bm{\{8, 14\}}$ \\
\text{2. Initial $p_g$ value} & 0.0249 & 0.0249 & $\bm{0.0249}$ & 0.0054 & 0.0054 & $\bm{0.0054}$ & 0.0004 & 0.0004 & $\bm{0.0004}$ \\
\text{3. Final success probability} & 0.9799 & 1 & $\bm{1}$ & 0.9997 & 1 & $\bm{1}$ & 0.9999 & 1 & $\bm{1}$ \\
\text{4. Repetition number of amplification operators} & 4 & 5 & $\bm{1}$ & 10 & 11 & $\bm{1}$ & 37 & 38 & $\bm{1}$ \\
    \bottomrule
  \end{tabular}}
\label{table-example-compare-all-qubits}
\end{table}

Based on the simulation data from the 6-, 8-, and 10-qubit scenarios, DEQAAAs exhibit significant performance advantages and distinct characteristics after decomposition, which are mainly reflected in the following three aspects.

First, DEQAAAs show obvious advantages in reducing quantum gate count and circuit depth after decomposition when compared with EQAAA and QAAA. This superiority becomes more prominent as the number of qubits increases. As presented in Table  \ref{table-resource-overhead-compare}, the resource overhead reduction ratio of DEQAAA-Multi-n rises steadily with system expansion. It achieves over $97\%$ reduction in both gate count and depth in the 10-qubit scenario, a performance that proves its high efficiency in large-scale systems. 

Second, decomposition leads to increased quantum gate count and circuit depth for DEQAAAs, but the growth remains moderate, especially in large-qubit scenarios. For 10-qubit systems, DEQAAA's gate count and depth after decomposition are 9.86 and 45.24 times the original values, respectively. In contrast, QAAA and EQAAA see much sharper growth, with their 10-qubit gate count and depth reaching over 37.91 and 190.28 times the original values. This gap stems from DEQAAA's distributed architecture, which reduces dependence on multi-qubit gates and avoids excessive elementary gate expansion during decomposition.

\begin{table}[H]
  \centering
  \caption{Comparison of quantum gate count and circuit depth after decomposition (QAAA, EQAAA, DEQAAA)}
  \scalebox{0.73}{
  \begin{tabular}{lcccccccc}
    \toprule
    \multirow{3}{*}{Qubit Count} & \multirow{3}{*}{Metric} & \multicolumn{3}{c}{Algorithms} & \multicolumn{2}{c}{DEQAAA-Multi-n vs. QAAA} & \multicolumn{2}{c}{DEQAAA-Multi-n vs. EQAAA} \\
    \cmidrule(lr){3-5} \cmidrule(lr){6-7} \cmidrule(lr){8-9}
    & & QAAA & EQAAA & \textbf{DEQAAA-Multi-n} & Gate Reduction & Depth Reduction & Gate Reduction & Depth Reduction \\
    \midrule
    \multirow{2}{*}{4-Qubit} 
    & Total Gates & 160 & 301 & \textbf{286} & \multirow{2}{*}{-78.80\%} & \multirow{2}{*}{-50.00\%} & \multirow{2}{*}{5.00\%} & \multirow{2}{*}{21.60\%} \\
    & Circuit Depth & 114 & 218 & \textbf{171} &  &  &  &  \\
    \midrule
    \multirow{2}{*}{6-Qubit} 
    & Total Gates & 1864 & 2323 & \textbf{733} & \multirow{2}{*}{60.70\%} & \multirow{2}{*}{68.30\%} & \multirow{2}{*}{68.50\%} & \multirow{2}{*}{74.60\%} \\
    & Circuit Depth & 1554 & 1940 & \textbf{492} &  &  &  &  \\
    \midrule
    \multirow{2}{*}{8-Qubit} 
    & Total Gates & 16531 & 18180 & \textbf{2056} & \multirow{2}{*}{87.60\%} & \multirow{2}{*}{89.30\%} & \multirow{2}{*}{88.70\%} & \multirow{2}{*}{90.30\%} \\
    & Circuit Depth & 15351 & 16885 & \textbf{1641} &  &  &  &  \\
    \midrule
    \multirow{2}{*}{10-Qubit} 
    & Total Gates & 232744 & 239033 & \textbf{6826} & \multirow{2}{*}{\textbf{97.10\%}} & \multirow{2}{*}{\textbf{97.20\%}} & \multirow{2}{*}{\textbf{97.14\%}} & \multirow{2}{*}{\textbf{97.32\%}} \\
    & Circuit Depth & 227043 & 233179 & \textbf{6243} &  &  &  &  \\
    \bottomrule
  \end{tabular}}
  \label{table-resource-overhead-compare}
\end{table}

Third, DEQAAAs achieve continuous performance improvement through variant optimization with more qubits, showing superior scalability compared to EQAAA and QAAA. This is consistent with the trend in Figure \ref{fig-example-compare-trend} (logarithmic scale). Upgrading from 2-node (2n) to optimal multi-node configurations (3n for 6Q, 4n for 8Q, 5n for 10Q) significantly reduces gate count and depth. The optimization effect strengthens with system scale: gate count reduction rises from $42.90\%$ (6Q) to $64.80\%$ (10Q), and depth reduction increases from $43.10\%$ (6Q) to $61.50\%$ (10Q). This makes DEQAAA-Multi-n far more scalable in large-qubit scenarios.

\begin{figure}[H]
\centering
\includegraphics[width=\textwidth]{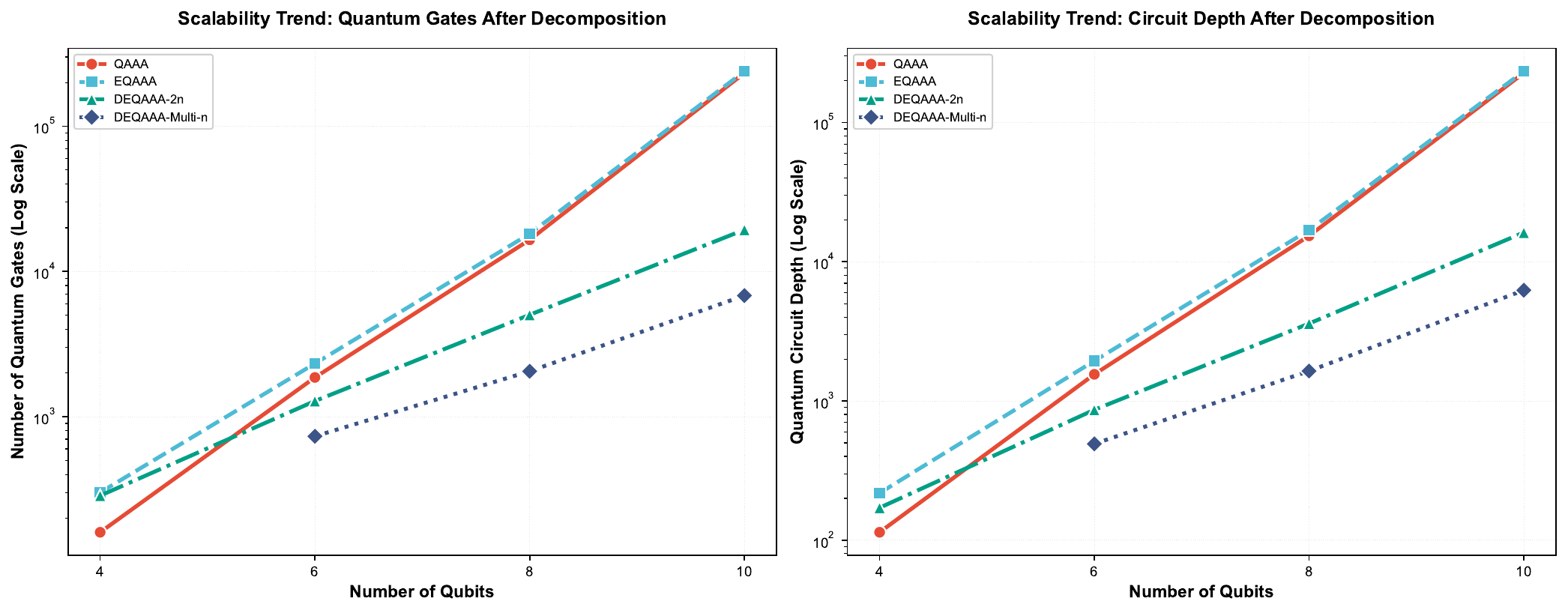}
\vspace{-2.5em}
\caption{Scalability trends of quantum gates and circuit depth of 6-qubit, 8-qubit and 10-qubit examples after decomposition.}
\label{fig-example-compare-trend}
\end{figure}

Overall, the distributed architecture of DEQAAAs grants balanced advantages in hardware adaptability, operational efficiency and resource overhead control, serving as a scalable, practical solution for resource-constrained distributed quantum systems in the NISQ era.

Finally, all simulation experiments and results in this chapter are reproducible, with relevant code available at the open-source URL provided at the end of the paper.

\section{Conclusion}\label{sec-conclusion}
Distributed quantum computing has emerged as a pivotal strategy to transcend the fundamental constraints of monolithic quantum architectures. By partitioning computational tasks across networked quantum processors, this approach not only mitigates the limitations imposed by qubit count and connectivity in individual devices but also establishes a promising pathway toward scalable quantum information processing. Particularly in the NISQ era, the distributed paradigm offers a practical advantage by reducing the circuit depth and qubit requirements at each node, thereby enhancing algorithmic feasibility and error resilience. This architectural shift is poised to unlock larger-scale quantum computation by efficiently leveraging modular quantum resources.

In this study, we address the challenge of achieving exact amplitude amplification for quantum states with arbitrary amplitude distributions by proposing the DEQAAA. Our solution decomposes the problem across an arbitrary number of nodes $t$ (where $2 \leq t \leq n$), enabling a flexible and scalable approach.

More specifically, the DEQAAA exhibits the following key features: (1) it supports flexible partitioning across an arbitrary number of nodes $t$, where $2 \leq t \leq n$, enabling scalable distributed execution; (2) the qubit requirement per node is substantially reduced, with the maximum number needed at any single node being $\max \left(n_0,n_1,\dots,n_{t-1} \right) $, where $n_j$ represents the qubit count at the $j$-th node and the constraint $\sum_{j=0}^{t-1} n_j =n$ ensures efficient resource distribution; (3) DEQAAA achieves exact amplitude amplification for multiple targets for a quantum state with an arbitrary amplitude distribution.

To experimentally validate the performance of DEQAAA, we implemented schemes for 4-qubit, 6-qubit, 8-qubit and 10-qubit systems, tailored to solve a specific exact amplitude amplification problem involving two targets (8 and 14 in decimal). This verification was carried out on the MindSpore Quantum simulation platform, which provides a stable environment for quantum circuit and algorithm simulations. The successful execution of the corresponding DEQAAA circuits across these systems confirms their practical viability, showcasing a practical and scalable workflow. A key advantage of the DEQAAA framework lies in distinctively superior quantum gate count and circuit depth relative to the original QAAA, achieved through efficient decomposition of the $C^{n-1}PS$ gate. In the 10-qubit scenario, for example, both indicators are reduced by over $97\%$ compared to QAAA and EQAAA, highlighting its remarkable resource efficiency. This low-depth architecture significantly enhances noise resilience of the algorithm, a critical feature for practical application in the NISQ era.

In conclusion, the DEQAAA overcomes critical limitations of existing QAAA, emerging as a powerful tool for distributed quantum computing in the NISQ era. Our results highlight the potential of distributed quantum algorithms to surmount the fundamental constraints of single-device architectures, thereby providing a more practical and scalable approach to quantum information processing within the current technological landscape. We anticipate that DEQAAA will play a pivotal role in the development of scalable and reliable quantum computing systems.

The two unresolved issues mentioned at the end of subsection \ref{circuitdepth} will be left for further investigation and resolution in future research.

\section*{Declaration of competing interest}
The authors declare that they possess no conflicting financial interests or personal relationships that could potentially bias the research findings presented in this paper.

\section*{Data availability statement}
The data that support the findings of this study are openly available at the following URL/DOI: \url{https://gitee.com/zhou-xu3/deqaaa\_code}.

\section*{Acknowledgements}
\noindent This work is supported by the China Postdoctoral Science Foundation under Grant No.2023M740874, the Scientific Foundation for Youth Scholars of Shenzhen University, Guangdong Provincial Quantum Science Strategic Initiative under Grant No.GDZX2403001 and No.GDZX2303001, the National Natural Science Foundation of China under Grant No.62571343, the Guangdong Provincial Quantum Science Strategic Initiative under Grants No.GDZX2403001 and the Quantum Science and Technology-National Science and Technology Major Project under Grant No.2021ZD0302901. Sponsored by CPS-Yangtze Delta Region Industrial Innovation Center of Quantum and Information Technology-MindSpore Quantum Open Fund.

\section*{Appendixes}

\begin{appendices}
\setcounter{equation}{0}
\renewcommand\theequation{A\arabic{equation}} %（A1）
\setcounter{figure}{0}
\renewcommand\thefigure{\Alph{section}\arabic{figure}}

\section{Analysis of the QAAA in subsection \ref{sec-QAA}}\label{AnalysisQAAA}
After applying the unitary operator $\mathcal{A}$, we obtain
\begin{eqnarray}
\vert\Psi\rangle &=& \mathcal{A} \vert 0\rangle^{\otimes n} =  \sum_{x \in \{0,1\}^n} \gamma_x \vert x\rangle \\
&=& \sum_{x \in X_g} \alpha_x \vert x\rangle + \sum_{x \in X_b} \beta_x \vert x\rangle\\
&=& \sqrt{p_g} \sum_{x \in X_g} \frac{\alpha_x}{\sqrt{p_g}} \vert x\rangle + \sqrt{1-p_g} \sum_{x \in X_b}  \frac{\beta_x}{\sqrt{1-p_g}} \vert x\rangle\\
&=&\sqrt{p_g}\vert\Psi_g\rangle + \sqrt{1-p_g}\vert\Psi_b\rangle,
\end{eqnarray}
where
\begin{eqnarray}
\sin \theta = \sqrt{p_g},~\cos \theta = \sqrt{1-p_g},
\end{eqnarray}
and
\begin{equation}
p_g = \sum_{x \in X_g} \vert\alpha_x\vert^2.
\end{equation}
Here, the states $\vert\Psi_g\rangle$ and $\vert\Psi_b\rangle $ are defined as
\begin{eqnarray}
\vert\Psi_g\rangle &=& \sum_{x \in X_g} \frac{\alpha_x}{\sqrt{p_g}} \vert x\rangle,\\
\vert\Psi_b\rangle &=& \sum_{x \in X_b}  \frac{\beta_x}{\sqrt{1-p_g}} \vert x\rangle.
\end{eqnarray}

Similar to Grover's algorithm, each application of $Q$ induces a rotation by an angle $2\theta$, moving the state closer from the initial state $\vert\Psi\rangle$ to the target state $\vert\Psi_g\rangle$, as illustrated in Figure \ref{QAAArotationG}.
\begin{figure}[H]
\centering
\includegraphics[width=0.35\textwidth]{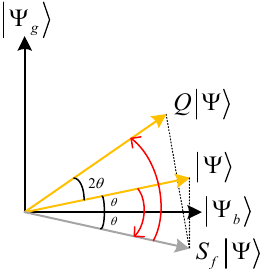}
\caption{Schematic illustration of the state evolution under the $Q$ operator.}
\label{QAAArotationG}
\end{figure}

After $r$ applications of $Q$, the state evolves to
\begin{eqnarray}
\vert\Psi'\rangle  = Q^{r} \vert\Psi\rangle = \sin\left((2r+1)\theta\right)\vert\Psi_g\rangle + \cos\left((2r+1)\theta\right)\vert\Psi_b\rangle.
\end{eqnarray}
To ensure
\begin{eqnarray}
\sin((2r+1)\theta) \approx 1,
\end{eqnarray}
we require
\begin{eqnarray}
(2r+1)\theta \approx \frac{\pi}{2},
\end{eqnarray}
leading to
\begin{eqnarray}
r \approx \frac{\pi/2-\theta}{2\theta}=\frac{\pi}{4\theta}-\frac{1}{2}.
\end{eqnarray}
Since $r$ must be an integer, we choose
\begin{eqnarray}
r=\left\lfloor\frac{\pi}{4\arcsin\left(\sqrt{p_g}\right)}\right\rfloor.
\end{eqnarray}
This choice maximizes the success probability, which is given by
\begin{eqnarray}
\sin^2 \left( \left(2\left\lfloor\frac{\pi}{4\arcsin\left(\sqrt{p_g}\right)}\right\rfloor+1\right) \arcsin \left(\sqrt{p_g}\right) \right) \approx 1.
\end{eqnarray}

\setcounter{equation}{0}
\renewcommand\theequation{B\arabic{equation}} %（B1）
\setcounter{figure}{0}
\renewcommand\thefigure{\Alph{section}\arabic{figure}}

\section{Analysis of the EQAAA in subsection \ref{sec-EQAA}}\label{AnalysisEQAAA}
The exact amplitude amplification operator $EQ$ admits the following equivalent representations:
\begin{eqnarray}
EQ&=& \mathcal{A} R_{\vert 0 \rangle ^ {\otimes n}} ^{\phi} \mathcal{A}^{\dagger} R_{f}^{\phi}\\
  &=& \left(I^{\otimes n} + \left(e^{i\phi}-1\right)\vert \Psi\rangle\langle \Psi\vert \right) \left(I^{\otimes n} + \left(e^{i\phi}-1\right)\vert \Psi_g \rangle\langle \Psi_g \vert  \right)\\
&=& e^{i\phi} \left[ \cos\left( \frac{\alpha}{2} \right)I + i\sin\left( \frac{\alpha}{2} \right) (n_x X + n_y Y + n_z Z)  \right] \label{laxis},
\end{eqnarray}
where the angle $\alpha$ and the components of the rotation axis are defined by
\begin{eqnarray}\label{alpha}
\alpha = 4\beta,~\sin \beta = \sin \left(\frac{\phi}{2}\right) \sin \theta,~\sin \theta = \sqrt{p_g},
\end{eqnarray}
and
\begin{eqnarray}\label{nxnynz}
n_x = \frac{\cos \theta}{\cos \beta} \cos \left(\frac{\phi}{2}\right), n_y = \frac{\cos \theta}{\cos \beta} \sin \left(\frac{\phi}{2}\right), n_z = \frac{\cos \theta}{\cos \beta} \cos \left(\frac{\phi}{2}\right) \tan \theta.
\end{eqnarray}
Here, $I$, $X$, $Y$, $Z$ denote the Pauli gates.

Within the three-dimensional space spanned by $\{\vert \Psi_g\rangle, \vert \Psi_b\rangle\}$, each application of $EQ$ can be viewed as a rotation by an angle $\alpha$ around the axis $\vec{l}= \left[n_x, n_y, n_z \right]^T$, which gradually evolves the state from  $\vert \Psi \rangle$ toward $\vert \Psi_g \rangle$. The total rotation angle is denoted by $\omega$ (see Figure \ref{EQAAArotationEQ}), which satisfies
\begin{eqnarray}
\cos \omega  = \cos \left( 2 \arccos \left( \sin \left(\frac{\phi}{2}\right) \sqrt{p_g} \right) \right).
\end{eqnarray}
It follows that
\begin{eqnarray}\label{omega}
\omega  = 2 \arccos \left( \sin \left(\frac{\phi}{2}\right) \sqrt{p_g} \right) = 2 \left( \frac{\pi}{2} -  \arcsin \left( \sin \left(\frac{\phi}{2}\right) \sqrt{p_g} \right) \right).
\end{eqnarray}

\begin{figure}[H]
\centering
\includegraphics[width=0.3\textwidth]{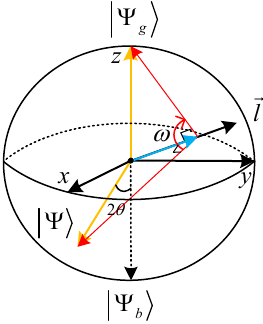}
\caption{Schematic illustration of the state evolution under the $EQ$ operator.}
\label{EQAAArotationEQ}
\end{figure}

After $J+1$ applications of $EQ$, to ensure the desired state $\vert \Psi_g \rangle$ is obtained with a theoretical success probability of $100\%$, the angle $\omega$ must satisfy
\begin{eqnarray}\label{omega2}
\omega  =  (J+1)\alpha = 4 (J+1) \arcsin \left( \sin \left(\frac{\phi}{2}\right) \sqrt{p_g} \right).
\end{eqnarray}
Equating Eq.~\eqref{omega} and Eq.~\eqref{omega2} yields the condition
\begin{eqnarray}\label{condition}
\sin \left( \frac{\pi}{4J+6} \right) = \sin \left(\frac{\phi}{2}\right) \sqrt{p_g}.
\end{eqnarray}
Consequently, the phase angle $\phi$ is given by
\begin{equation}
    \phi = 2 \arcsin \left(\frac{\sin \left(\frac{\pi}{4 J + 6}\right)}{\sqrt{p_g}}\right).
\end{equation}

\setcounter{equation}{0}
\renewcommand\theequation{C\arabic{equation}} %（C1）
\setcounter{figure}{0}
\renewcommand\thefigure{\Alph{section}\arabic{figure}}

\section{Proof of Theorem \ref{proofcorrectness}}\label{prooftheorem1}
The correctness of DEQAAA relies on the exact amplitude amplification property of EQAAA (detailed in Appendix \ref{AnalysisEQAAA}). Specifically, the core operator $EQ$ of EQAAA achieves $100\%$ target probability through rotational evolution in the subspace spanned by the global target state $\vert\Psi_g\rangle$ and the global non-target state $\vert\Psi_b\rangle$. DEQAAA implements exact amplification in two phases, and the correctness analysis is conducted based on the value of $p'_g$ (the global success probability after Phase 1), as follows:

\textbf{Case 1:} \bm{$p'_g = 1$}

Case 1 corresponds to $p'_g = 1$, where $p'_g = \sum_{x \in X_g} P'(x)$ denotes the global success probability after Phase 1, and $P'$ is the exact probability distribution of the state $\vert\Psi_1\rangle = \left(\bigotimes_{j=0}^{t-1} EQ_j^{J_j+1}\right)\mathcal{A}\vert 0\rangle^{\otimes n}$. When $p'_g = 1$, $\vert\Psi_1\rangle$ already satisfies $\sum_{x \in X_g} \vert\langle x \vert \Psi_1 \rangle \vert^2 = 1$, indicating that Phase 1 alone achieves global exact amplification. In this case, DEQAAA omits Phase 2, and the final state $\vert\Psi_2\rangle = \vert\Psi_1\rangle=\vert\Psi_g\rangle$ directly meets the requirement of $\sum_{x \in X_g} \vert\langle x \vert \Psi_2 \rangle \vert^2 = 1$.

\textbf{Case 2:} \bm{$p'_g \neq 1$}

Case 2 corresponds to $p'_g \neq 1$, where Phase 1 fails to amplify the success probability of $\vert\Psi_1\rangle$ to 1 via parallel local exact amplitude amplification. To achieve global exact amplification, DEQAAA proceeds to Phase 2, first encapsulating the complete unitary operation of Phase 1 into a composite operator $\mathcal{B} = \left(\bigotimes_{j=0}^{t-1} EQ_j^{J_j+1}\right)\mathcal{A}$. The core of Phase 2 is the global exact amplitude amplification operator $\widehat{EQ} = \mathcal{B}R_{\vert 0\rangle^{\otimes n}}^{\hat{\phi}}\mathcal{B}^{\dagger}R_f^{\hat{\phi}}$, whose form is fully consistent with EQAAA’s core operator $EQ$.

By setting the phase angle $\hat{\phi}$ and iteration count $\hat{J}$ to satisfy EQAAA’s rotational angle condition (Eq.~\eqref{condition}), $\hat{J}+1$ iterations of $\widehat{EQ}$ can exactly evolve the state $\vert\Psi_1\rangle$ to $\vert\Psi_g\rangle = \vert\Psi_2\rangle$. Here, $\hat{J}= \left\lfloor \frac{\pi}{4\arcsin\left(\sqrt{p'_g}\right)} - \frac{1}{2} \right\rfloor$ is determined by $p'_g$, ensuring the rotational evolution meets the exact amplification requirement.

The ``Phase 1 + Phase 2" process is equivalent to the standard EQAAA process with $\mathcal{B}$ as the state preparation operator, whose correctness is guaranteed by EQAAA’s exact amplification property. The final state $\vert\Psi_2\rangle = \vert\Psi_g\rangle$ thus satisfies $\sum_{x \in X_g} \vert\langle x \vert \Psi_2 \rangle \vert^2 = 1$.

In summary, regardless of whether $p'_g$ is 1 or not, DEQAAA can achieve exact amplitude amplification for the target strings $x \in X_g$, and the final state $\vert\Psi_2\rangle$ satisfies $\sum_{x \in X_g} \vert\langle x \vert \Psi_2 \rangle \vert^2 = 1$. Thus, the correctness of DEQAAA is proven.

\setcounter{equation}{0}
\renewcommand\theequation{D\arabic{equation}} %（D1）
\setcounter{figure}{0}
\renewcommand\thefigure{\Alph{section}\arabic{figure}}

\section{Proof of Theorem \ref{theorem2qaaadepth}}\label{prooftheorem2}
The unitary operator $\mathcal{A}$ prepares an $n$-qubit state $\vert\Psi\rangle$ with an arbitrary amplitude distribution, and its implementation depth is defined as $\mathrm{dep}(\mathcal{A})$. 

Without loss of generality, let $p_g$ be the probability of measuring a target string in the global target set $X_g$, and $\vert X_g\vert $ the number of target strings in $X_g$. 

In QAAA, the amplitude amplification operator is defined as $Q=\mathcal{A}S_{\vert 0\rangle^{\otimes n}}\mathcal{A}^\dagger S_f$, which is executed repeatedly $r$ times, where $r$ is given by: 
\begin{equation}
    r=\left\lfloor\frac{\pi}{4\arcsin\left(\sqrt{p_g}\right)}\right\rfloor.
\end{equation}

For simplicity, the circuit depth of the $C^{n-1}Z$ gate is defined as 1. From this, the depth of $S_f$ (for $\vert X_g \vert$ targets) is $\text{dep}(S_f)=3\vert X_g\vert $, and the depth of $S_{\vert 0\rangle^{\otimes n}}$ is $\text{dep}(S_{\vert 0\rangle^{\otimes n}})=3$. 

The total depth of QAAA is the sum of the initial state preparation depth and the depth of $r$ repeated $Q$ operations. Thus, the total depth is
\begin{eqnarray}
\text{dep}(\text{QAAA}, p_g, \vert X_g\vert) 
&=& \text{dep}(\mathcal{A}) + r \cdot \left(3\vert X_g\vert + 2\text{dep}(\mathcal{A}) + 3\right) \\
&=& (2r + 1) \cdot \text{dep}(\mathcal{A}) + r \cdot (3\vert X_g\vert + 3).
\end{eqnarray}

This completes the proof.

\setcounter{equation}{0}
\renewcommand\theequation{E\arabic{equation}} %（E1）
\setcounter{figure}{0}
\renewcommand\thefigure{\Alph{section}\arabic{figure}}

\section{Proof of Theorem \ref{theorem3eqaaadepth}}\label{prooftheorem3}
Following the basic definitions in Theorem \ref{theorem2qaaadepth}, the implementation depth of the unitary operator $\mathcal{A}$ is $\text{dep}(\mathcal{A})$; $p_g$ denotes the probability of measuring a target string in the global target set $X_g$, and $\vert X_g\vert$ is the number of target strings in $X_g$. 

In EQAAA, the exact amplitude amplification operator is defined as $EQ=\mathcal{A} R_{\vert 0\rangle^{\otimes n}}^\phi \mathcal{A}^\dagger R_f^\phi$, which is iteratively applied $J+1$ times, with $J$ given by: 
\begin{equation}
    J = \left\lfloor \frac{\pi}{4\arcsin\left(\sqrt{p_g}\right)} - \frac{1}{2} \right\rfloor.
\end{equation}

Similar to the $C^{n-1}Z$ gate, the circuit depth of the $C^{n-1}\text{PS}(\phi)$ gate is defined as 1 for simplicity. From this, the depth of $R_f^\phi$ (for $\vert X_g\vert$ targets) is $\text{dep}(R_f^\phi)=3\vert X_g\vert$, and the depth of $R_{\vert 0\rangle^{\otimes n}}^\phi$ is $\text{dep}(R_{\vert 0\rangle^{\otimes n}}^\phi)=3$, which is consistent with the depths of $S_f$ and $S_0$. 

The total depth of EQAAA is the sum of the initial state preparation depth and the depth of $J+1$ repeated $EQ$ operations. Thus, the total depth is
\begin{eqnarray}
\text{dep}(\text{EQAAA}, p_g, \vert X_g\vert)
&=& \text{dep}(\mathcal{A}) + \left( J +1 \right) \cdot \left(3\vert X_g\vert+2\text{dep}(\mathcal{A})+3\right) \\
&=&  \left( 2J +3 \right) \cdot \text{dep}(\mathcal{A}) + \left( J+1 \right) \cdot \left(3\vert X_g\vert+3\right).
\end{eqnarray}

This completes the proof.

\setcounter{equation}{0}
\renewcommand\theequation{F\arabic{equation}} %（F1）
\setcounter{figure}{0}
\renewcommand\thefigure{\Alph{section}\arabic{figure}}

\section{Proof of Theorem \ref{theorem4deqaaadepth}}\label{prooftheorem4}
Following the basic definitions in Theorem \ref{theorem2qaaadepth} and Theorem \ref{theorem3eqaaadepth}, the global target set is $X_g$ (with $\vert X_g\vert$ target strings), and the distributed architecture contains $t$ nodes; the $j$-th node has a local target set $X_j$ (with $\vert X_j\vert$ target strings), a success measurement probability $p_j$, and the implementation depth of the corresponding local unitary preparation operator $\mathcal{A}_{\vert \varphi_j \rangle}$ (generating substate $\vert\varphi_j\rangle = \mathcal{A}_{\vert \varphi_j \rangle}\vert 0\rangle^{\otimes n_j}$) is $\text{dep}(\mathcal{A}_{\vert \varphi_j \rangle})$. The global initial state is prepared by the unitary operator $\mathcal{A}$, with depth $\text{dep}(\mathcal{A})$.

The subsequent proof will derive the circuit depths of the first phase (local execution of distributed nodes) and the second phase (global exact amplification) respectively, and finally integrate to obtain the total depth of DEQAAA.

\textbf{Phase 1: Local Execution of Distributed Nodes}

The first phase is the local execution phase of distributed nodes, where each node independently performs local EQAAA operations. The core is the dedicated exact amplitude amplification operator $EQ_j$ (defined in Eq.~\eqref{eq-EQ-j}), which consists of local phase rotation operators $R_{f_j}^{\phi_j}$, $R_{\vert 0 \rangle ^ {\otimes n_j}} ^{\phi_j}$, and $\mathcal{A}_{\vert \varphi_j \rangle}$ and its inverse, with the phase angle $\phi_j$ calculated by Eq.~\eqref{eq-angle-phi-j}.

The $EQ_j$ operator is iteratively applied $J_j + 1$ times, with the iteration count $J_j$ given by 
\begin{eqnarray}
J_j &=& \left\lfloor \frac{\pi}{4\arcsin\left(\sqrt{p_j}\right)} - \frac{1}{2} \right\rfloor, 
\end{eqnarray}
consistent with the global iteration logic of EQAAA. Referring to the depth definition logic in Theorem \ref{theorem3eqaaadepth}, the depth of the $C^{n-1}\text{PS}(\phi)$ gate is defined as 1. Since the local phase rotation operators $R_{f_j}^{\phi_j}$ and $R_{\vert 0\rangle^{\otimes n_j}}^{\phi_j}$ are both constructed based on the $C^{n_j-1}\text{PS}(\phi_j)$ gate, their depths are both set to 1. Thus, the operation depth of the $j$-th node is 
\begin{eqnarray}
\text{dep}(\text{$j$-th node, $p_j$, $\vert X_j\vert$}) = \left( J_j +1 \right) \cdot \left(3\vert X_j\vert + 2\text{dep}(\mathcal{A}_{\vert \varphi_j \rangle}) + 3\right),
\end{eqnarray}
which matches the corresponding formula in the theorem. Its depth composition logic is fully consistent with that of the global operator in EQAAA, i.e., the total depth is the superposition of $J_j+1$ iterations of the single $EQ_j$ operator.

As each node performs local operations in parallel, the total depth of the first phase is the sum of the global initial state preparation depth and the maximum operation depth among all nodes, i.e.,
\begin{eqnarray}
\text{dep}(\text{First phase}) = \text{dep}(\mathcal{A}) + \max\limits_{j \in \{0,1,\cdots,t-1 \}} \left\{ \text{dep}(\text{$j$-th node, $p_j$, $\vert X_j\vert$})\right\}.
\end{eqnarray}

\textbf{Phase 2: Global Exact Amplification}

After the first phase, the system state evolves to $\vert \Psi_1 \rangle$, and its global success probability (i.e., the success probability of the second phase) is defined as 
\begin{equation}
p'_g= \sum_{x \in X_g} P'(x) ,
\end{equation}
where $P'$ is the exact probability distribution corresponding to $\vert \Psi_1 \rangle$. The second phase is global exact amplification, executed only when $p'_g \neq 1$. In this case, the first phase cannot amplify the success probability of $\vert\Psi_1\rangle$ to 1 (failing to achieve global exact amplification), requiring correction via global exact amplitude amplification operations.

In this phase, the complete unitary operation of the first phase is first encapsulated into a composite operator 
\begin{eqnarray}
\mathcal{B} = \left[ \bigotimes_{j=0}^{t-1} \left( EQ_{j}^{J_{j}+1} \right)  \right]  \mathcal{A},
\end{eqnarray}
with the core being the global exact amplitude amplification operator 
\begin{equation}
    \widehat{EQ}= \mathcal{B} R_{\vert 0 \rangle ^{\otimes n}} ^{\hat{\phi}} \mathcal{B}^{\dagger} R_{f}^{\hat{\phi}}.
\end{equation}
The iteration count 
\begin{equation}
    \hat{J}= \left\lfloor \frac{\pi}{4\arcsin\left(\sqrt{p'_g}\right)} - \frac{1}{2} \right\rfloor,
\end{equation}
which is consistent with the iteration logic of $J$ in Theorem \ref{theorem3eqaaadepth}.

Referring to the depth derivation logic of Theorem \ref{theorem3eqaaadepth}, $\text{dep}(\mathcal{B}) = \text{dep}(\text{First phase})$ and $\text{dep}(\mathcal{B}^\dagger) = \text{dep}(\mathcal{B})$ (the depth of an operator is equal to that of its inverse). The depths of global phase rotation operators are consistent with EQAAA: the depth of $R_f^{\hat{\phi}}$ is $3\vert X_g\vert$, and the depth of $R_{\vert 0\rangle^{\otimes n}}^{\hat{\phi}}$ is 3. Thus, the depth of a single $\widehat{EQ}$ iteration is $3\vert X_g\vert + 2\text{dep}(\text{First phase}) + 3$, and the total depth of the second phase is 
\begin{eqnarray}
\text{dep}(\text{Second phase})=\left( \hat{J} +1 \right) \cdot \left(3\vert X_g\vert + 2\text{dep}(\text{First phase}) + 3\right).
\end{eqnarray}

In summary, the total circuit depth of DEQAAA is determined by the value of $p'_g$: 

When $p'_g = 1$, the local exact amplitude amplification executed in parallel in the first phase suffices to achieve exact amplification of the global target state, and the total depth is the depth of the first phase, i.e., 
\begin{eqnarray}
\text{dep}(\text{DEQAAA}) = \text{dep}(\text{First phase}).
\end{eqnarray}

When $p'_g \neq 1$, global exact amplification in the second phase is required to correct deviations, and the total depth is the sum of the depths of the two phases, i.e., 
\begin{eqnarray}
\text{dep}(\text{DEQAAA}) = \text{dep}(\text{First phase}) + \text{dep}(\text{Second phase}).
\end{eqnarray}

Thus, the circuit depth of DEQAAA satisfies Eq.~\eqref{depthdeqaaa}, and the theorem is proven.

\setcounter{equation}{0}
\renewcommand\theequation{F\arabic{equation}} %（G1）
\setcounter{figure}{0}
\renewcommand\thefigure{\Alph{section}\arabic{figure}}
\setcounter{lemma}{0}
\renewcommand\thelemma{\Alph{section}\arabic{lemma}}
\section{The decomposition of $C^{3}PS(\phi)$ in subsection \ref{sec-datacomparison}}\label{appx-decom}
In the above experiments, QAAA achieves phase flip using $C^{3}Z$ gates, whereas EQAAA and DEQAAA realize phase rotation through $C^{3}PS(\phi)$ gates. Note that the $C^{3}Z$ gate is a special case of the $C^{3}PS(\phi)$ gate with $\phi = \pi$, i.e., $C^{3}Z = C^{3}PS(\pi)$.

However, current quantum computers generally do not support the direct implementation of such complex multi-controlled gates. A more common approach is to decompose them into equivalent combinations of single-qubit and two-qubit gates.

The decomposition of the $C^{3}PS(\phi)$ gate was also discussed in Ref.~\cite{zhou2025distributedEGGA}. Nevertheless, this method requires the execution of multiple CPS and CNOT gates. Since two-qubit gates like these are inherently more challenging to implement with high fidelity than single-qubit operations, they significantly increase the practical complexity of the circuit. Furthermore, as the number of qubits $n$ increases, this approach adopts the technique outlined in Lemma 7.5 of Ref.~\cite{barenco1995elementary}. A notable drawback of this method is that it further necessitates the decomposition of multi-controlled $X$ gates, thereby introducing an additional layer of complexity and resource overhead.

To enhance the implementation efficiency of multi-controlled phase gates on NISQ devices, we propose an improved decomposition scheme. This scheme constructs a deterministic quantum circuit that expresses the gate as an optimized sequence of elementary single-qubit gates and CNOT gates.

Specifically, we present an enhanced decomposition scheme that implements the $C^{3}PS(\phi)$ gate. This method naturally generalizes to arbitrary $n$-qubit controlled-phase gates ($C^{n-1}PS(\phi)$), though the detailed derivation is not included in this paper.

\begin{lemma}\label{theorem-CnPS-decom}
(\textbf{Decomposition of $C^{3}PS(\phi)$}) For any single-qubit unitary gate $PS(\phi)$, the $C^{3}PS(\phi)$ gate can be decomposed into the quantum circuit depicted in Figure \ref{fig-C3PS-decom-circuit}, where $\phi$ denotes the phase angle and $n=4$ is the total number of qubits involved in the $C^{3}PS(\phi)$ gate.

\begin{figure}[H]
\centering
\includegraphics[width=\textwidth]{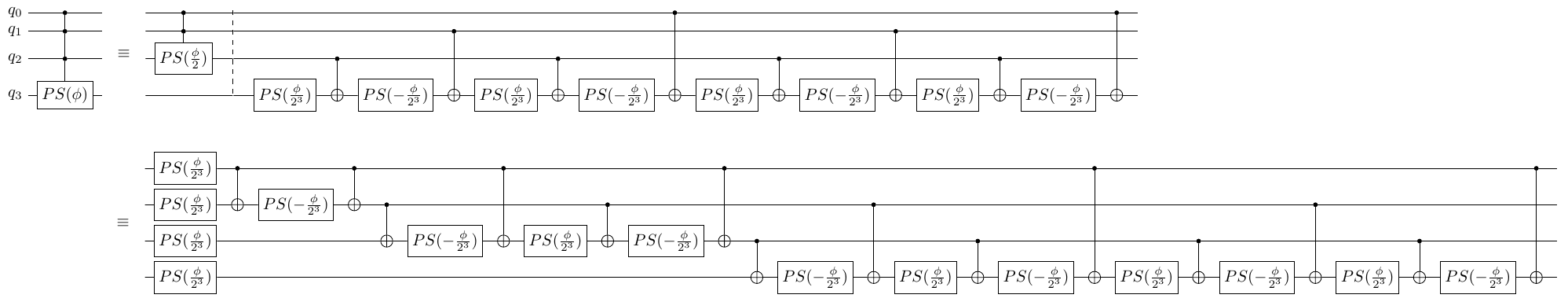}
\vspace{-1.8em}
\caption{Quantum circuit for the decomposition of $C^3PS(\phi)$ gate.}
\label{fig-C3PS-decom-circuit}
\end{figure}

\end{lemma}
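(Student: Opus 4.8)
The plan is to exploit the fact that $C^{3}PS(\phi)$ is \emph{diagonal} in the computational basis: it multiplies $\vert 1111\rangle$ by $e^{i\phi}$ and fixes every other basis state. Since CNOT gates permute computational basis states and the single-qubit $PS$ gates are diagonal, the entire circuit of Figure~\ref{fig-C3PS-decom-circuit} also acts diagonally, sending $\vert x_0 x_1 x_2 x_3\rangle \mapsto e^{i\Theta(x)}\vert x_0 x_1 x_2 x_3\rangle$ for some real phase function $\Theta:\{0,1\}^4\to\mathbb{R}$. Hence the lemma reduces to the purely classical claim that $\Theta(x)=\phi\, x_0 x_1 x_2 x_3 \pmod{2\pi}$, and it suffices to compute $\Theta$ and compare it against the target phase on each of the $16$ basis states (or, more economically, to verify the identifying algebraic identity below).

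First I would read the phase polynomial off the circuit. Each CNOT replaces the content of its target wire by the XOR of that wire with its control, so at every stage each wire carries a parity $\bigoplus_{i\in S}x_i$ of some subset $S\subseteq\{0,1,2,3\}$ of the input bits. A single-qubit gate $PS(\theta)$ inserted on a wire currently carrying $\bigoplus_{i\in S}x_i$ contributes exactly $\theta\,\bigl(\bigoplus_{i\in S}x_i\bigr)$ to $\Theta(x)$. Tracking the CNOT pattern (a Gray-code traversal of the non-empty subsets of the four wires) therefore yields
\begin{equation}
\Theta(x)=\frac{\phi}{8}\sum_{\emptyset\neq S\subseteq\{0,1,2,3\}}(-1)^{\vert S\vert-1}\Bigl(\bigoplus_{i\in S}x_i\Bigr),
\end{equation}
where the $\pm\phi/8$ phase angles are precisely the single-qubit rotations appearing in the figure.

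The remaining step is to establish the inclusion–exclusion identity
\begin{equation}
\phi\, x_0 x_1 x_2 x_3=\frac{\phi}{8}\sum_{\emptyset\neq S\subseteq\{0,1,2,3\}}(-1)^{\vert S\vert-1}\Bigl(\bigoplus_{i\in S}x_i\Bigr)
\end{equation}
for all $x\in\{0,1\}^4$, which I would prove by the substitution $y_i=1-2x_i\in\{+1,-1\}$: writing each parity as $\tfrac12\bigl(1-\prod_{i\in S}y_i\bigr)$ reduces the right-hand side to $\prod_i x_i$ after the binomial sums $\sum_{\emptyset\neq S}(-1)^{\vert S\vert-1}$ and $\sum_{\emptyset\neq S}\prod_{i\in S}(-y_i)$ collapse via $\prod_i(1-y_i)=2^4\prod_i x_i$. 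On $\vert 1111\rangle$ every parity equals $\vert S\vert\bmod 2$, so only the $2^{3}=8$ odd-size subsets survive and the sum returns $\phi$; on any input with at least one zero bit it cancels to $0$. I expect the main obstacle to be the bookkeeping of the second paragraph rather than this identity: one must track, wire by wire and gate by gate, exactly which parity each $PS(\pm\phi/8)$ acts upon as the CNOTs reshape the wire contents, and confirm that the signed multiset of parities produced by the specific ordering in Figure~\ref{fig-C3PS-decom-circuit} matches the right-hand side term for term, with all intermediate parities correctly uncomputed so that no spurious phase survives on the $\vert x\rangle$ register.
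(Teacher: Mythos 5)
Your proposal is correct, but it takes a genuinely different route from the paper. The paper's proof of Lemma \ref{theorem-CnPS-decom} is a one-line assertion: one simply confirms that the $16\times 16$ unitary matrices of the two circuits in Figure \ref{fig-C3PS-decom-circuit} coincide, i.e.\ brute-force matrix verification with no structural insight. You instead give a phase-polynomial argument: observe that both sides are diagonal (CNOTs permute basis states, $PS$ gates are diagonal, and the permutations are uncomputed), read off the phase function $\Theta(x)$ contributed by each $PS(\pm\phi/8)$ acting on the parity $\bigoplus_{i\in S}x_i$ currently carried by its wire, and then reduce the lemma to the classical identity $\phi\,x_0x_1x_2x_3=\tfrac{\phi}{8}\sum_{\emptyset\neq S}(-1)^{\vert S\vert-1}\bigl(\bigoplus_{i\in S}x_i\bigr)$, which your $y_i=1-2x_i$ substitution proves cleanly (the two collapsing sums give $\tfrac12\cdot 1-\tfrac12\bigl(1-16\prod_i x_i\bigr)=8\prod_i x_i$, as claimed). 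What your approach buys is significant: it explains \emph{why} the angles are $\pm\phi/8$, it localizes all circuit-specific content into a finite bookkeeping check of which parities the figure's CNOT pattern produces, and it generalizes immediately to $C^{n-1}PS(\phi)$ with angles $\pm\phi/2^{n-1}$ --- precisely the generalization the paper claims but declines to derive. The one caveat is that your argument is conditional on the figure actually being the Gray-code phase-gadget circuit you assume (the paper's text describes it only as ``an optimized sequence of elementary single-qubit gates and CNOT gates''); the final wire-by-wire verification against the printed circuit, which you correctly flag as the remaining obstacle, is still required, whereas the paper's matrix-equality check subsumes that step by definition.
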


\begin{proof}
It is straightforward to confirm that the unitary matrices associated with the quantum circuits on both sides of the equality sign are equivalent.
\end{proof}

\end{appendices}

\balance

\newpage
\bibliography{References}
\newpage

\end{document}